% sage_latex_guidelines.tex V1.20, 14 January 2017
\documentclass[]{article}

\usepackage{moreverb,url}

\usepackage{amsmath,amsfonts,amsthm} % Math packages
%%% Custom sectioning
%\usepackage{sectsty}
%\allsectionsfont{\centering \normalfont\scshape}
\usepackage{natbib}

\usepackage{hyperref}
\usepackage[pdftex]{graphicx}	
\usepackage{url}
\usepackage[]{natbib}
\usepackage{float}
\usepackage{xcolor}
\usepackage[margin=1in]{geometry}

\usepackage{xr}
\hypersetup{
	colorlinks,
	citecolor=black,
	filecolor=blue,
	linkcolor=blue,
	urlcolor=blue
}

\newcommand{\bs}[1]{\boldsymbol{#1}}
\newcommand{\e}{\varepsilon}
\newcommand{\expec}{\mathbb{E}}
\newcommand{\pr}{{P}}
\newcommand{\wh}[1]{\widehat{#1}}
\newcommand{\expit}{\operatorname{expit}}
\newcommand{\logit}{\operatorname{logit}}

\newtheorem{theorem}{Theorem}
\newtheorem{corollary}{Corollary}
\newtheorem{lemma}{Lemma}

\newtheorem{note}{Comment}
\newtheorem{condition}{Condition}
\newtheorem{definition}{Definition}

\newenvironment{customthm}[1]
{\innercustomthm}
{\endinnercustomthm}

\begin{document}

\title{A Targeted Approach to Confounder Selection for High-Dimensional Data}

\author{Asad Haris\footnote{Department of Earth, Ocean and Atmospheric Sciences, University of British Columbia. \emph{email}: aharis@eoas.ubc.ca} \ and\  Robert Platt\footnote{Department of Epidemiology, Biostatistics and Occupational Health, McGill University. \emph{email}: robert.platt@mcgill.ca}}

%\corrauth{Asad Haris, 
%Epidemiology, Biostatistics and Occupational Health, McGill University, Montreal QC, Canada.}
%\email{asad.haris@mail.mcgill.ca}

% Keywords command
\providecommand{\keywords}[1]
{
	\small	
	\textbf{\textit{Keywords---}} #1
}

\maketitle

\begin{abstract}
	We consider the problem of selecting confounders for adjustment from a potentially large set of covariates, when estimating a causal effect. Recently, the high-dimensional Propensity Score (hdPS) method was developed for this task; hdPS ranks potential confounders by estimating an importance score for each variable and selects the top few variables. However, this ranking procedure is limited: it requires all variables to be binary. We propose an extension of the hdPS to general types of response and confounder variables. We further develop a group importance score, allowing us to rank groups of potential confounders. The main challenge is that our parameter requires either the propensity score or response model; both vulnerable to model misspecification. We propose a targeted maximum likelihood estimator (TMLE) which allows the use of nonparametric, machine learning tools for fitting these intermediate models. We establish asymptotic normality of our estimator, which consequently allows constructing confidence intervals. We complement our work with numerical studies on simulated and real data.
\end{abstract}

\keywords{Causal inference, Confounder selection, High-dimensional data,  Targeted maximum likelihood estimation, High-dimensional propensity score}

\section{Introduction}
\label{sec:introduction}

The goal of most epidemiological studies is to determine the relationship between some exposure and an outcome variable. Ideally, a randomized controlled trial (RCT) is used to assess this relationship where subjects are randomized to treatment or placebo. This randomization ensures that, on average, subject characteristics are the same in both treatment arms. However, due to ethical or financial constraints, a RCT is not always feasible and investigators often use observational studies. In this case, analyzing the exposure-response relation must account for the additional factors, namely confounders~\citep{greenland2001confounding}. For decades, it has been known that not adjusting for confounders or presence of unmeasured confounders can lead to data exhibiting an exposure-response association when none exists or even reverse the direction of true effect~\citep{neyman1938lectures,bickel1975sex}. Confounding bias is a particular concern for epidemiological studies~\citep[see e.g.][]{macmahon2001reliable,rothman2015modern}. While confounder adjustment is important for bias reduction, adjusting for irrelevant variables can lead to efficiency losses~\citep{greenland2008invited,schisterman2009overadjustment,rotnitzky2010note,myers2011effects,patrick2011implications}. Thus, it is crucial to correctly identify the  set of confounders. 

%One approach to adjusting for confounders is building a multivariate model of the response as a function of the exposure and confounders. Alternative methods use instead the  propensity score, the propensity score is the probability of a subject being exposed based on their confounding variables. %\cite{haukoos2015propensity} outlined four general ways propensity scores are used, 1. propensity score matching, 2. stratification on the propensity score, 3. covariate adjustment using the propensity score and, 4. inverse probability of treatment weighting by the propensity score. 
%An accurate estimate of the multivariate model or propensity score relies on correctly identifying an appropriate set of confounders. 

Traditionally, confounders are identified from a collection of variables. Each variable in the collection, is classified as either a confounder or not based on expert knowledge/assumptions of the underlying data generating mechanism~\citep{robins2001data}. This approach is suitable if the true confounder set is a subset of our collection; this is often not true and rarely verifiable. With the advent of economical data collection and storage technology, high-dimensional data have become increasingly accessible. In high-dimensional data we often have thousands of variables for each subject and, it is reasonable to assume that a subset of these variables is equal to, or a proxy, for the set of confounders. However, manually sifting through thousands of variables is not feasible, and furthermore,  the underlying mechanisms for each variable may not be well understood. Thus, we require an automated procedure which can efficiently select a suitable subset of confounders.

The problem of variable selection for building precise prediction models has been extensively studied in the literature for both low and high-dimensional data. However, the literature for confounder selection, particularly for high-dimensional data, is sparse. A recent proposal is the C-TMLE \citep{laan2010collaborative,gruber2010application,laan2011targeted}, which performs confounder selection when used for estimating the average treatment effect (ATE); a computationally faster variation, the SC-TMLE was proposed by \cite{ju2019scalable}. Another proposal is the outcome adaptive lasso~\citep{shortreed2017outcome}, an application of the adaptive lasso~\citep{zou2006adaptive} with weights informed by the outcome regression model. These methods are designed for confounder selection but not confounder ranking, which might be of interest to investigators. 

One proposal for confounder ranking is the variable importance measure (VIM) of \cite{chambaz2012estimation}, while theoretically appealing, it is computationally cumbersome for even moderate-dimensional data. 
Another approach, popular in pharmacoepidemiology, is the hdPS proposal of \cite{schneeweiss2009high}. It has been used for various data analyses~\citep{brookhart2010confounding,patorno2010anticonvulsant,rassen2010cardiovascular,rassen2012using,schneeweiss2010comparative,schneeweiss2010variation,huybrechts2011risk,toh2011confounding,he2014mining,pang2016effect}, its performance has been studied empirically, and some extensions/modifications have been proposed~\citep{patrick2011implications,rassen2011covariate,franklin2014plasmode,franklin2015regularized,franklin2017comparing,guertin2016head,schneeweiss2017variable,enders2018potential,karim2018can,ju2019propensity}. Despite its popularity in pharmacoepidemiology, hdPS suffers from drawbacks: primarily, it discretizes each variable into three binary variables and ranks the resulting binary variables. Interpreting the ranking % or selection o
of these artificial variables is difficult. Furthermore, hdPS cannot rank groups of variables, which might be of interest to investigators.%, e.g. variables clustered by location or batch effects.

%a major concern is that hdPS discretizes each variable into three binary variables and ranks the resulting binary variables. Interpreting the ranking or selection of these created binary variables is thus infeasible. Furthermore, hdPS cannot rank groups of variables, which might be of interest to investigators.

%Furthermore, investigators might also be interested in ranking groups of variables, e.g. variables clustered by location or batch effects, this cannot be achieved by hdPS or other methods.

In this paper, we propose a novel confounder ranking technique which easily scales to high-dimensional data. We propose a generalization of the Bross score~\citep{bross1966spurious} as used by \cite{schneeweiss2009high} for hdPS. %Beyond generalizing the Bross formula, this paper makes important novel contributions to the literature which we summarize next. 
With the generalized Bross formula, we propose two ranking schemes which require only a binary exposure, placing no restrictions on other variables. % (possible extensions to general exposure variables is also discussed). 
Our framework naturally extends to ranking groups of variables. We establish asymptotic normality of our estimators allowing us to calculate confidence intervals and $p$-values.% for hypothesis testing. 

%In Section~\ref{sec:methods}, we present our novel ranking schemes for individual and groups of covariates. In Section~\ref{sec:estimation}, we present two naive plug-in estimators, followed by an efficient, doubly robust one step correction. We then develop a TMLE and establish asymptotic normality. In Sections~\ref{sec:simulations} and \ref{sec:bostonData}, we present numerical experiments and data analysis. Finally, concluding remarks and future work is discussed in Section~\ref{sec:conclusion}. 
%In Section~\ref{sec:extensions} we discuss extensions of our method, its limitations and potential solutions to these limitations. We present simulation studies in Section~\ref{sec:simulation} and analyze the Boston housing data and the \textcolor{red}{[some high dimensional data]} in Section~\ref{sec:datanalysis}. Finally, concluding remarks and future work is discussed in Section~\ref{sec:conclusion}.  

\section{A Novel Confounder Scoring Scheme}
\label{sec:methods}

\subsection{Motivation}
\label{sec:motivation}
To motivate our proposal, we discuss desirable properties of a confounder ranking system and, gaps in the existing literature. We then present the statistical intuition behind the Bross formula %~\citep{bross1966spurious} 
and consequently our extension of it. 

We require, a (semi-)automated confounder ranking scheme which can be extended to groups of variables, handle all variable types and, can be efficiently computed for high-dimensional data. The C-TMLE %~\citep{laan2010collaborative,gruber2010application} 
and its scalable extension, the SC-TMLE, %~\citep{ju2019collaborative}), 
do not rank variables and cannot perform grouped variable selection. While SC-TMLE is substantially faster than C-TMLE, existing software~\citep{ju2017ctmle} is still computationally cumbersome for high-dimensional data. Similary, the outcome adaptive lasso~\citep{shortreed2017outcome} does not rank variables; additionally, it assumes restrictive generalized linear models for the outcome and propensity score models. The VIM proposal of \cite{chambaz2012estimation} is theoretically appealing: it measures the marginal impact of a variable on the average treatment effect, however, it is computationally infeasible for even moderate dimensions. Finally, we discuss the widely used method among pharmacoepidemiologists,  hdPS~\citep{schneeweiss2009high}. The hdPS is a multi-step algorithm, we will focus on three crucial steps: 1. ranking variables by proportion of non-zero values and selecting the top $J$, 2. converting each variable into multiple (upto three) binary variables and, 3. ranking variables by the Bross formula and selecting the top $K$. The first step above can lead to excluding important variables and should be omitted~\citep{schuster2015role}. Step two is needed to use the existing Bross formula which requires all variables to be binary. However, this discretization means we can no longer rank the original variables, i.e., hdPS can be used as a ranking system only when all variables are binary.
%interpreting the ranking of these binary variables is difficult and does not translate into variable importance of the original variables. 
%Thus, hdPS can only rank  binary variables. %No one proposal addresses all our requirements for a confounder ranking system for high-dimensional data. 

%The quantity of interest is an extension of the Bross formula %~\citep{bross1966spurious} or the apparent relative risk~\citep{walker1991observation}. 
The intuition behind the Bross formula, and our extension of it, is as follows: if no relationship exists between the exposure and outcome, then any apparent measure of association is due to a third, confounding variable. Thus, the relative risk under the assumption of no exposure-outcome relationship is a measure of confounding effect. To be precise, consider random variables $(O,E,C)\sim P$ for some measure $P$ where $E \in \{0,1\}$ is the exposure, $O\in \mathbb{R}$ the outcome and $C\in \mathbb{R}$ is an additional covariate. We assume the \emph{spurious effect hypothesis}~\citep{bross1966spurious}: $O\perp E|C$ and expectations under this hypothesis are denoted by a $\dagger$. The confounding impact can then be measured by: 
%the relative risk or risk difference under the spurious effect hypothesis:
%The exposure-outcome effect size under the spurious effect hypothesis will then be a measure of confounding impact, we consider the two parameters:
\begin{equation}
	\label{eqn:poiUH}
	\psi(P) = \frac{\mathbb{E}_P^{\dagger}(O|E=1)}{\expec_P^{\dagger}(O|E=0)}, \quad \phi(P) = \expec_P^{\dagger}(O|E=1) - \expec_P^{\dagger}(O|E=0).
\end{equation}
To derive estimable quantities from \eqref{eqn:poiUH}, we first note that the spurious effect hypothesis implies
%\begin{equation}
%\label{eqn:spuriousEffect}
$\expec_P^{\dagger}(O|E=1, C = c) = \expec_P^{\dagger}(O|E=0, C = c) = \expec_P^{\dagger}(O|C = c) = \expec_P(O|C = c), 
$
%\end{equation} 
where the last equality is because  we made no assumptions about the distribution of $O|C$. We then simplify (\ref{eqn:poiUH}): 
%\begin{align*}
$
\expec_P^{\dagger}(O|E=e) = \frac{\expec_P^{\dagger}(1_{E=e}O)}{\pr(E=e)}= \frac{\expec_P^{\dagger}\left[ 1_{E=e}\expec_P^{\dagger}(O|E,C) \right] }{\pr(E=e)} = \frac{\expec_P^{\dagger}\left[ 1_{E=e}\expec_P(O|C) \right] }{\pr(E=e)} 
= \expec_P\left\{\expec_P(O|C)| E=e \right\}
. 
$
%\end{align*}
Denoting $\tau(c) = \expec_P(O|C=c)$, our parameters of interest are
\begin{equation}
	\label{eqn:poi1}
	\psi(P) = \frac{\expec_P\{\tau(C)|E=1\} }{\expec_P\{\tau(C)|E=0\}}, \quad \phi(P) =  {\expec_P\{\tau(C)|E=1\} }- {\expec_P\{\tau(C)|E=0\}}.
\end{equation}
For brevity, we will drop the subscript $P$ when it is clear from context. Motivation for $\psi$ and $\phi$ becomes clearer by considering their values when $C$ is not a confounder. If $C$ is independent of either the exposure $E$ or the outcome $O$, then we immediately see that $\psi(P) = 1$ and $\phi(P) = 0$. Values of $\psi$ and $\phi$ away from 1 and 0, respectively, indicate presence of a confounding effect.

\subsection{High-dimensional Confounder Ranking and Selection}
\label{sec:poi}
Exploiting the intuition of (\ref{eqn:poi1}), we apply them to each variable for a set of potential confounders. Consider i.i.d. data $(O_i,E_i,C_{1i},\ldots, C_{pi}) \sim P$  $(i=1,\ldots, n)$.
% for some probability measure $P$. 
Our target parameters are the \emph{ratio scores}, $\psi_j(P)$, and \emph{difference scores}, $\phi_j(P)$: 
\begin{equation}
	\psi_j(P) = \frac{\expec\{\tau_j(C_{ji})|E_i=1\} }{\expec\{\tau_j(C_{ji})|E_i=0\}},\qquad \phi_j(P) =  {\expec\{\tau_j(C_{ji})|E_{i}=1\} }- {\expec\{\tau_j(C_{ji})|E_i=0\}}, \label{eqn:psiphiD1}
\end{equation}
%\begin{equation}
%	 \phi_j(P) =  {\expec\{\tau_j(C_{ji})|E_{i}=1\} }- {\expec\{\tau_j(C_{ji})|E_i=0\}} \label{eqn:phiD2},
%\end{equation}
where $\tau_j(c) = \expec(O_i|C_{ji} = c)$. Thus we can rank the $p$ covariates according to either $|\psi_j(P) - 1|$ or $|\phi_j(P)|$. We can also define the parameters $\psi$ and $\phi$ in terms of the propensity score, $\pi_j(c) = \pr(E_i=1|C_{ji} = c)$. This is demonstrated in the following lemma~(proof in Web Appendix A). 
\begin{lemma}
	\label{lemma:propScore}
	For a random vector $(O_i, E_i, C_{1i},\ldots, C_{pi})\sim P$ the parameters (\ref{eqn:psiphiD1}) can equivalently be written as $\psi_j(P) = {[\expec\{O_i\pi_j(C_{ji})\}/\pr(E_i=1) ]}/\{\expec[O_i\{1 - \pi_j(C_{ji})\}]/\pr(E_i=0) \}$, and $\phi_j(P) =  {\expec\{O_i\pi_j(C_{ji})\}/\pr(E_i=1) } - {\expec[O_i\{1 - \pi_j(C_{ji})\}]/\pr(E_i=0)}$.	
\end{lemma}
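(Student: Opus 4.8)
The plan is to verify the claimed identity by matching the numerators and denominators of $\psi_j$ (and of $\phi_j$) separately. Everything reduces to the single equality
\begin{equation*}
\expec\{\tau_j(C_{ji})\mid E_i=1\} \;=\; \frac{\expec\{O_i\,\pi_j(C_{ji})\}}{\pr(E_i=1)},
\end{equation*}
together with its mirror image $\expec\{\tau_j(C_{ji})\mid E_i=0\} = \expec[O_i\{1-\pi_j(C_{ji})\}]/\pr(E_i=0)$. Substituting these two expressions into the definitions \eqref{eqn:psiphiD1} immediately produces the propensity-score forms of $\psi_j$ and $\phi_j$ stated in the lemma.

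To prove the first equality I would start from the right-hand side. Condition on $C_{ji}$ inside the numerator and apply the tower property, using $\tau_j(c)=\expec(O_i\mid C_{ji}=c)$ and the fact that $\pi_j(C_{ji})$ is a function of $C_{ji}$, to get $\expec\{O_i\,\pi_j(C_{ji})\}=\expec\{\pi_j(C_{ji})\,\tau_j(C_{ji})\}$. Next note that, by definition of the propensity score, $\pi_j(C_{ji})=\expec(1_{E_i=1}\mid C_{ji})$; a second application of the tower property, now pulling the $C_{ji}$-measurable factor $\tau_j(C_{ji})$ inside, gives $\expec\{\pi_j(C_{ji})\,\tau_j(C_{ji})\}=\expec\{1_{E_i=1}\,\tau_j(C_{ji})\}$. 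Finally, rewrite this last expectation as $\pr(E_i=1)\,\expec\{\tau_j(C_{ji})\mid E_i=1\}$ by the definition of conditional expectation given the event $\{E_i=1\}$, which is the desired identity. The denominator is handled verbatim, with $1-\pi_j(C_{ji})=\expec(1_{E_i=0}\mid C_{ji})$ replacing $\pi_j(C_{ji})=\expec(1_{E_i=1}\mid C_{ji})$.

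There is no substantive obstacle here: the argument is just two conditioning steps, and the only bookkeeping point is that all conditioning defining $\tau_j$ and $\pi_j$ is with respect to the single covariate $C_{ji}$, so that the pairs $(O_i,\tau_j(C_{ji}))$ and $(1_{E_i=1},\pi_j(C_{ji}))$ may be interchanged under the relevant expectations. Throughout, one assumes the implicit regularity already needed for \eqref{eqn:psiphiD1} to be meaningful, namely $\expec|O_i|<\infty$ so that $\tau_j$ is well defined and $\pr(E_i=1)\in(0,1)$ so the ratios make sense. It is worth observing that this computation is the propensity-score analogue of the simplification of \eqref{eqn:poiUH} carried out in Section~\ref{sec:motivation}: that derivation reached the quantities $\expec\{\tau(C)\mid E=e\}$ from the outcome-regression side, and the lemma simply records that starting from $\pi_j$ leads to the same place.
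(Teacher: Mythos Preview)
Your proposal is correct and follows essentially the same approach as the paper's proof: both arguments reduce the lemma to the single identity $\expec\{\tau_j(C_{ji})\mid E_i=e\}=\expec\{O_i\,1_{E_i=e}(\pi_j,\,1-\pi_j)\}/\pr(E_i=e)$ and establish it via two applications of the tower property with conditioning on $C_{ji}$. The only cosmetic difference is direction---the paper starts from $\expec\{\tau_j(C_{ji})\mid E_i=1\}$ and works toward the propensity-score expression, whereas you start from the propensity-score side and work back---but the conditioning steps are identical.
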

Note that our measures of confounder importance are in terms of conditional expectations, which can be calculated for any variable type (given existence of first moments). In contrast, the Bross formula of the hdPS method is in terms of relative risk and prevalence of $C_i$, which can only be calculated for binary $C_i$. Furthermore, a scoring system for grouped confounder ranking follows from a straightforward extension of our parameters $\psi_j$ and $\phi_j$. Consider i.i.d data $(O_i, E_i, \bs{C}_{1i},\ldots, \bs{C}_{pi})\sim P$ where $\bs{C}_{ji}\in \mathbb{R}^{p_j}$ is now a group of covariates. The grouped ratio and difference scores are defined as
%\begin{align}
$\Psi_j(P) = {\expec\{\bs{\tau}_j(\bs{C}_{ji})|E_i=1\} }/{\expec\{\bs{\tau}_j(\bs{C}_{ji})|E_i=0\}}$, %\label{eqn:psiGD1}\\
$\Phi_j(P) =  {\expec\{\bs{\tau}_j(\bs{C}_{ji})|E_{i}=1\} }- {\expec\{\bs{\tau}_j(\bs{C}_{ji})|E_i=0\}},$
%\label{eqn:phiGD1},
%\end{align}
where $\bs{\tau}_j(\bs{c}) = \expec(O_i|\bs{C}_{ji} = \bs{c})$. As in Lemma~\ref{lemma:propScore}, we can also define the parameters $\Psi_j$ and $\Phi_j$ in terms of the propensity score, $\bs{\pi}_j(\bs{c}) = \pr(E_i=1|\bs{C}_{ji} = \bs{c})$.

Say we have estimates $\wh{\psi}_j$ or $\wh{\phi}_j$ for each variable in our data, we can then rank variables by confounder importance and select the top $K$. Alternatively, if we can show our estimates to be asymptotically normal we can conduct an $\alpha$-level test for $H_0: \psi_j = 1$ vs $H_1: \psi_j \not= 1$ (or $H_0: \phi_j = 0$ { vs } $H_1: \phi_j \not= 0)$. Thus the user needs to only specify the cut-off $K$ or level $\alpha$. Asymptotically normal and efficient estimators for $\psi_j, \phi_j$ are presented in a later Section.

\subsection{More on the Bross Formula}
\label{sec:brossFormula}

Before presenting our estimation framework, we address the main criticisms of using the Bross formula for confounder selection and, we also outline some of its limitations. 

The primary issue is that the Bross formula measures the confounding impact of a single variable and does not naturally  extend to multiple variables. The original Bross formula was designed under a single potential confounder, sequentially applying this multiple variables ignores the \emph{joint confounding effect}. This seems akin to fitting multiple univariate regression models as opposed to a single multivariate model. Another well known issue of the Bross formula is how it handles instrumental variables: variables associated with the exposure only will not have a Bross score of zero. We now address these issues, making a case for the utility of the Bross formula and consequently our proposal.

The hdPS proposal (a special case of our framework) has been shown to \emph{work} in a number of simulation settings, including plasmode studies where investigators aim to replicate real data~\citep{patrick2011implications, franklin2014plasmode,franklin2015regularized,franklin2017comparing,pang2016targeted,pang2016effect,guertin2016head,schneeweiss2017variable,karim2018can}. To understand this behavior, we explore the theoretical quantity estimated by our parameters  $\psi_j$ and $\phi_j$. Consider the following simple data generating mechanism:
$\expec(O|E,C_1,\ldots,C_p) = \beta_0 + \theta E +\sum_{j=1}^{p} \beta_j C_j$,  
$\pr(X=1|C_1,\ldots,C_p) = \sum_{j=1}^{p}\alpha_jC_j\, : \sum_{j=1}^{p}\alpha_j = 1$  and  $\alpha_j> 0$ ,
$C_j \overset{i.i.d.}{\sim} Unif[0,1]$.
By straightforward calculation we obtain
%\begin{align*}
%\label{eqn:tauSimple}
%\tau_j(C_j) %&= \beta_0 + \beta_jC_j + \theta\expec(X|C_j) + \sum_{k\not= j} \beta_k\expec(C_k)\\
%&= \beta_0 + \beta_jC_j + \theta\Big\{ \alpha_j C_j + \sum_{k\not= j} \alpha_k \expec(C_k) \Big\} + \sum_{k\not= j} \beta_k\expec(C_k)\\
$\tau_j(C_j) = \beta_0 + \beta_jC_j + \theta\Big\{ \alpha_j C_j + \sum_{k\not= j} \alpha_k/2 \Big\} + \sum_{k\not= j} \beta_k/2,$
%\end{align*}
and thus for $j=1,\ldots, p$ our parameter is
%\begin{align*}
%	\label{eqn:phijSimple}
%\phi_j %&= \{\beta_j + \theta \alpha_j \} \{\expec(C_j|X=1) - \expec(C_j|X=0) \}\\
%&= \{\beta_j + \theta \alpha_j \} \Big[ \frac{2\alpha_j}{3} + \sum_{k\not= j} \alpha_k\expec(C_k) - \Big\{1 - \frac{2\alpha_j}{3} + \sum_{k\not= j} \alpha_k\expec(C_k) \Big\} \Big]\\
%&= \{\beta_j + \theta \alpha_j \} \Big\{ \frac{4\alpha_j}{3} + \sum_{k\not= j} \alpha_k - 1 \Big\}\\
%&= \{\beta_j + \theta \alpha_j \} \Big\{ \frac{4\alpha_j}{3} + (1-\alpha_j) - 1 \Big\}\\
$\phi_j = ({\alpha_j}/{3})(\beta_j + \theta \alpha_j )$.
%\end{align*}
We see that $\phi_j$ in this case depends on the combined association between $C_j$ and both the exposure \emph{ and} outcome. Variables with a large Bross score will have large $\alpha_j$ and $\beta_j$ values. This suggest a valid ranking system for confounder importance; recall that we are only interested in confounder ranking, not the magnitude or range of scores. The example also illustrates what happens with instrumental variables $(\beta_j = 0)$: while the score will not be zero, if the confounder-exposure effect sizes are relatively small compared to the confounder-outcome effect sizes, instrumental variables will have a lower score compared to true confounders.

The above example illustrates how our proposed confounder scoring framework can be valid even if we ignore the joint confounder effect. Based on existing work on simulation studies of hdPS, we expect to see similar results in more complex data generating mechanisms. Finding a minimum set of assumptions under which our proposed framework is valid is an interesting open problem. 

\section{Our Estimation Framework}
\label{sec:estimation}

\subsection{Naive Plug-in Estimation}
Let $\wh{\tau}_j$ and $\wh{\pi}_j$ denote some estimates for $\tau_j$ and $\pi_j$, respectively. Denote by $\wh{P}^{om}$ a probability measure corresponding to our estimate $\wh{\tau}_j$, then plug-in estimators for the scores are
\begin{align*}
	%\label{eqn:pluginOut}
	\psi_j(\wh{P}^{om}) &= \frac{\expec_{P_n}\{\wh{\tau}_j(C_{ji})|E_i=1\} }{\expec_{P_n}\{\wh{\tau}_j(C_{ji})|E_i=0\}}, \\
	\phi_j(\wh{P}^{om}) &= {\expec_{P_n}\{\wh{\tau}_j(C_{ji})|E_{i}=1\} }- {\expec_{P_n}\{\wh{\tau}_j(C_{ji})|E_i=0\}},
\end{align*}
where $P_n$ is the empirical distribution of our data. Similarly, let $\wh{P}^{ps}$ be the measure associated with $\wh{\pi}_j$, then alternative plug-in estimators are
\begin{align*}
	%	\label{eqn:pluginPS}
	\psi_j(\wh{P}^{ps}) &= \frac{\expec_{P_n}\{O_i\wh{\pi}_j(C_{ji})\}/\expec_{P_n}(E_i) }{\expec_{P_n} [O_i\{1 - \wh{\pi}_j(C_{ji})\}]/\expec_{P_n}(1 - E_i)},\\
	\phi_j(\wh{P}^{ps}) &= {\expec_{P_n}\{O_i\wh{\pi}_j(C_{ji})\}}/{\expec_{P_n}(E_i)} - {\expec_{P_n}[O_i\{1 - \wh{\pi}_j(C_{ji})\}]}/{\expec_{P_n}(1-E_i)}.
\end{align*}

These na\"{i}ve plug-in estimators face a number of limitations. Firstly, they can suffer from model misspecification. 
%as they depend on estimated models for the outcome model or propensity score. 
Secondly, any asymptotic results might only be tractable for limited parametric model choices for $\wh{\tau}_j$ and $\wh{\pi}_j$. Thirdly, there is no guarantee of statistical efficiency, i.e., there might be other consistent estimators with lower variance.
%A general purpose solution would be to use bootstrap estimates for generating confidence intervals and $p$-values. 
In the next section, we present an estimation framework which leads to efficient, asymptotically normal and doubly robust estimators. The double robustness property means that we need only correctly estimate either $\pi_j$ or $\tau_j$ (without knowledge of which model is correctly specified). Furthermore, our estimators can use advanced machine learning methods for estimating $\pi_j$ and $\tau_j$.  

\subsection{Efficient Estimation}

In this section, we will present two estimators: the first performs a one-step correction to the na\"{i}ve plug-in estimator, and the second is a targeted maximum likelihood estimator~(TMLE) following the framework of \cite{laan2011targeted}. We begin with a reparametrization of our parameters $\psi_j$ and $\phi_j$ as follows: 
\begin{equation}
	\label{eqn:psiTheta}
	\psi_j(P) %\equiv \psi_j(\theta_j, \mu_E, \mu_O;P) 
	= \frac{\theta_j(P)/\mu_E(P)}{\{ \mu_O(P) - \theta_j(P) \}/\{ 1 - \mu_E(P) \}},
\end{equation}
\begin{equation}
	\label{eqn:phiTheta}
	\phi_j(P) %\equiv \phi_j(\theta_j, \mu_E, \mu_O;P) 
	= {\theta_j(P)/\mu_E(P)} - {\{ \mu_O(P) - \theta_j(P) \}/\{ 1 - \mu_E(P) \}},
\end{equation}
where $\theta_j(P) = \expec_P\{I(E_i=1) \tau_j(C_{ji}; P ) \}, \mu_O(P) = \expec_P(O_i)$ and $\mu_E(P) = \expec_P(E_i)$. By using the sample means as estimates for $\mu_O$ and $\mu_E$, and the delta method, it is sufficient to establish asymptotic normality for $\theta_j$. In the remainder of this section, we will first present estimators for $\theta_j$ and estimators for (\ref{eqn:psiTheta}) and (\ref{eqn:phiTheta}) will follow.

%
%For i.i.d. data $\bs{X}_i = (O_i, E_i, C_{1i}, \ldots, C_{pi})\sim P \ (i=1,\ldots,n)$  for some probability measure $P$, define 
%\begin{equation}
%	\label{eqn:simpleParas}
%	\theta_j(P) = \expec_P\{I(E_i=1) \tau_j(C_{ji}; P ) \}, \quad \mu_O(P) = \expec_P(O_i), \quad \mu_E(P) = \expec_P(E_i),
%\end{equation}
%where $\tau_j(c; P) = \expec_P(O_i|C_{ji} = c)$. Then, our variable importance parameters can be written as: %\eqref{eqn:simpleParas}: 
%
%With this reparametrization, we observe that it is sufficient to find an asymptotically efficient estimator for $\theta_{j}(P)$. 

%
%As before, we denote denote the propensity score by $\pi_j(c; P) = \pr(E=1|C_j=c)$. We define the \emph{adjusted exposure-response model} as $Q_{j}(e, c_j; P) = \expec_P(O|E=e, C_j = c_j)$. The functions $\pi_j, \tau_j$ and $Q_j$ are related by the following identity:
%\begin{equation}
%	\label{eqn:3func}
%	\tau_j(c_j;P) = \pi_j(c_j;P)Q_j(1, c_j;P) + \{1-\pi_j(c_j;P)\}Q_j(0,c_j;P).
%\end{equation}
%
%For any $P$-measurable function $f$, we define $Pf = \int f(\bs{x})\, dP(\bs{x})$ and $P_nf = n^{-1}\sum_{i=1}^{n}f(\bs{x}_i)$ for an i.i.d. sample $\bs{x}_i\sim P\, (i=1,\ldots,n)$. More generally, for a measure $P$ we denote its empirical probability measure by $P_n$. We define a \emph{statistical model} as a collection of probability measures denoted by $\mathcal{M}$. Throughout this paper, we consider nonparametric statistical models. 

\subsubsection{One-step correction}
\label{sec:oneStep}
Denote by $\wh{\tau}_{j,n}$ and $\wh{\pi}_{j,n}$, some estimators for $\tau_j$ and $\pi_j$, respectively. We propose a one-step correction to the naive plug-in estimator $\wh{\theta}^{naive}_{j,n} = n^{-1}\sum_{i=1}^{n} I(E_i=1)\wh{\tau}_{j,n}(C_{ji})$:
\begin{align}
	\wh{\theta}^{dr}_{j,n} %&= \theta_{j}(\wh{\pi}_{j,n}^{dr},\wh{\tau}_{j,n}^{dr}) + n^{-1}\sum_{i=1}^{n} D_{\theta_{j}(\wh{\pi}^{dr}_{j,n}, \wh{\tau}_{j,n}^{dr})} \nonumber\\
	&=  \wh{\theta}^{naive}_{j,n} +  \frac{1}{n}\sum_{i=1}^{n}  O_i \wh{\pi}_{j,n}(C_{ji}) - \wh{\tau}_{j,n}(C_{ji}) \wh{\pi}_{j,n}(C_{ji}). \label{eqn:DRestimate}
\end{align}
This corrected estimator is doubly robust: $\wh{\theta}^{dr}_{j,n} \to \theta_j(P^0)$ as long as one of our estimators $\wh{\tau}_{j,n}$ and $\wh{\pi}_{j,n}$ is consistent. We formalize and prove this result in Web Appendix A.

Furthermore,  the simple one step correction makes it easy to implement and computationally efficient. However, its main limitation is that the one step correction can lead to an estimate outside the parameter space. For example, for outcome $O_i\in [0,1]$ we must have $\phi_j\in [-1,1]$, but our estimate $\wh{\phi}^{dr}_{j,n}$ may be outside this range. Our second proposed estimator is a plug-in estimator which overcomes this issue whilst maintaining the desirable properties of $\wh{\theta}_{j,n}^{dr}$.

\subsubsection{Targeted Maximum Likelihood Estimator}
\label{sec:tmle}

Targeted maximum likelihood estimation~(TMLE), is a general framework for obtaining efficient plug-in estimators. A high level summary of this procedure is as follows: we begin with some initial estimate $\wh{P}_0$ of $P^0$~(the measure of our observed data), we then iteratively update $\wh{P}_0$, generating a sequence of estimates $\{\wh{P}_k\}_{k=0}^{K}$ until some convergence criteria is met, finally we obtain the plug-in estimator $\psi_j(\wh{P}_K)$ (or $\phi_j(\wh{P}_K)$). For a more detailed discussion of the general TMLE framework, see \cite{laan2011targeted}.

We now present the algorithm for obtaining the targeted maximum likelihood estimator~(TMLE) for $\theta_j(P^0)$, details of the derivation of the TMLE are relegated to Web Appendix B. For the algorithm, we require the \emph{adjusted exposure-response model} defined as $Q_{j}(e, c_j; P) = \expec_P(O|E=e, C_j = c_j)$. The use of $Q_j$ is a technical requirement which we detail in Web Appendix B, briefly, our TMLE algorithm exploits the decomposition of the joint density~(or Radon-Nikodym derivative): $p(o,e,c_j) = p(o|e,c_j)p(e|c_j)p(c_j)$. 
Note that the functions $\pi_j, \tau_j$ and $Q_j$ are related by the identity, $\tau_j(c_j;P) = \pi_j(c_j;P)Q_j(1, c_j;P) + \{1-\pi_j(c_j;P)\}Q_j(0,c_j;P)$. The TMLE, denoted by $\wh{\theta}_j^{tl}$ is obtained by the following algorithm:

\begin{enumerate}
	\item Initialize estimators $\wh{\pi}^0_j, \wh{Q}^0_j$ for $\pi_j(\cdot;P^0)$ and $Q_j(\cdot, \cdot; P^0)$; and $\epsilon_1^0 = \epsilon_2^0 = 1$.
	\item For $k=0,1,\ldots$ until $\epsilon_1^k = \epsilon_2^k = 0$.
	\begin{enumerate}
		\item Update the propensity score $\wh{\pi}^{k+1}_j \gets \wh{\pi}^{k}_j(\epsilon_1^k)$ where $\wh{\pi}^{k}_j(\epsilon) = \expit\{\logit(\wh{\pi}_j^{k}) + \epsilon H_1^{k}\}$, $H_1^k(\cdot) = -2\wh{\pi}^{k}_j(\cdot)\{\wh{Q}_j^{k}(1,\cdot) - \wh{Q}_j^{k}(0, \cdot)\} - \wh{Q}_j^{k}(0, \cdot)$ and,
		\begin{align*}
			%H_1^k(\cdot) &= -2\wh{\pi}^{k}_j(\cdot)\{\wh{Q}_j^{k}(1,\cdot) - \wh{Q}_j^{k}(0, \cdot)\} - \wh{Q}_j^{k}(0, \cdot),\label{eqn:H1}\\
			\epsilon_1^k &= \underset{\epsilon \in \mathbb{R}}{\operatorname{argmax} }\  \frac{1}{n} \sum_{i=1}^{n} E_i\log\{\wh{\pi}^{k}_j(\epsilon)\} + (1-E_i)\log\{ 1 - \wh{\pi}^{k}_j(\epsilon) \}.
		\end{align*}
		\item Update the adjusted exposure-response model $\wh{Q}_j^{k+1}\gets \wh{Q}_j^k(\epsilon_2^k)$ where $\wh{Q}_j^k(\epsilon) = \wh{Q}_j^{k} + \epsilon H^{k}_2$, $H_2^k(\cdot) = -\wh{\pi}^{k+1}_j(\cdot)$ and,
		\begin{align*}
			%H_2^k(\cdot) &= -\wh{\pi}^{k+1}_j(\cdot),\\
			\epsilon_2^k &= \underset{\epsilon \in \mathbb{R}}{\operatorname{argmin} }\  \frac{1}{2n} \sum_{i=1}^{n} \{O_i - \wh{Q}_j^k(\epsilon)\}^2.
		\end{align*}
	\end{enumerate} 
	\item Return the parameter estimate
	\begin{equation}
		\label{eqn:tmleEst}
		\wh{\theta}_{j,n}^{tl} = \frac{1}{n}\sum_{i=1}^{n} \{I(E_i=1)\wh{\tau}_j^{k+1}(C_{ji})\},
	\end{equation}
	where 	$\wh{\tau}^{k+1}_j(c) = \wh{\pi}^{k+1}_j(c)\wh{Q}^{k+1}_j(1, c) + \{1-\wh{\pi}^{k+1}_j(c)\}\wh{Q}^{k+1}_j(0,c)$.
\end{enumerate}

\begin{note}
	The TMLE is a \emph{substitution estimator}, which means that as long as the range of $\wh{\pi}_j^{0}$ and $\wh{Q}_j^0$ is valid, our TMLE will also be within the parameter space.   
\end{note}
\begin{note}
	The update in Step 2(b) is designed for a continuous outcome variable. By appropriately defining the path $\wh{Q}_j^k(\epsilon)$ and loss function, we can build a TMLE for other types of outcome variables. We detail this in Web Appendix B.
\end{note}
\begin{note}
	Both estimators, TMLE and doubly robust, rely on initial estimators for $\pi_j(\cdot, P^0)$ and  $Q_j(\cdot, \cdot; P^0)$. While a semi-parametric or nonparametric approach may seem computationally expensive, recall that $\phi_j$ and $\psi_j$ are univariate functions. Thus, we can utilize fast algorithms for nonparametric estimators such as wavelets. Alternatively, we could use polynomial regression which we found to be sufficient for our numerical experiments.
\end{note}
\begin{note}
	The above algorithm can also be used for grouped confounder ranking, i.e. estimating $\Psi_j$ and $\Phi_j$. In this case estimating $Q_j$ and $\pi_j$ can become computationally demanding. However, in most cases of high-dimensional data, either each variable group will be small, or for large sized groups the total number of groups will be small. Furthermore, in many data applications~(including our analysis of the Boston Housing data), simple parametric models can serve as sufficient estimators. 
\end{note}

\subsubsection{Asymptotic Normality of Estimators}
\label{sec:theory}

In this section, we establish asymptotic normality of our estimators. Theorem~\ref{thm:one} establishes asymptotic normality for $\wh{\theta}^tl_{j,n}$ and $\wh{\theta}_{j,n}^{dr}$. This leads to Corollary~\ref{corr:main}, which establishes asymptotic normality for estimates of $\phi_j$ and $\psi_j$. Before presenting our main results, we require some technical background and definitions.  

Firstly, our results crucially rely on the efficient influence curve~(EIC) of a parameter. Roughly, the influence curves of a parameter, $\psi(P)$, are essentially derivatives of the functional $\psi$ with respect to $P$. Alternatively, we can define an influence curve for \emph{asymptotically linear estimators}, an estimator $\wh{\psi}$ is said to be asymptotically linear if it satisfies
\begin{equation}
	\label{eqn:asymLin}
	\wh{\psi} - \psi = \frac{1}{n}\sum_{i=1}^{n} D_{\psi}(\bs{X}_i) + o_p(n^{-1/2}).
\end{equation}
We then define $D_{\psi}$ as an influence curve. For non-parametric modeling of $\pi_j$ and $Q_j$, there is only one influence curve called the EIC. \cite{bickel1993efficient} show that if an estimator is asymptotically linear, and its influence curve is the EIC, then it is asymptotically efficient. Form \eqref{eqn:asymLin}, it follows from the CLT that $\sqrt{n}(\wh{\psi} - \psi)$ is asymptotically normal with variance $Var\{D_{\psi}(\bs{X}_i)\}$. The EIC for $\theta_j$ (derived in Web Appendix B) is
\begin{equation}
	\label{eqn:eicTheta}
	D_{\theta_j(P)}(\bs{x}) \equiv D_{\theta_j(\pi_j, \tau_j)}(\bs{x}) = o\pi_j(c_j; P) + \tau_j(c_j;P)\{I(e=1) - \pi_j(c_j;P)\} - \theta_j(P), 
\end{equation}
for a vector $\bs{x} = (o,e,c_1,\ldots,c_p)$. 

We now present and discuss the main conditions needed for our theoretical results. 

\begin{condition}
	\label{assum:slowConvergence}
	For initial estimators $\wh{\tau}_j^n$ and $\wh{\pi}_j^n$ of $\tau_j(P^0)$ and $\pi_j(P^0)$, respectively, we have  $\|\wh{\tau}^n_j(c) - \tau_j(c;P^0)\|_{\infty} = o_p(n^{-1/4})$ and $\|\wh{\pi}^n_j(c) - \pi_j(c;P^0)\|_{\infty} = o_p(n^{-1/4})$.
\end{condition}
\begin{condition}
	\label{assum:Donsker}
	With initial estimators $\wh{\tau}_j^n$ and $\wh{\pi}_j^n$, the function $\bs{x} \mapsto D_{\theta_j(\wh{\tau}_j^n, \wh{\pi}_j^n)}(\bs{x})$ is $P^0$--Donsker~\citep{vaart1998asymptotic}.
\end{condition}

Condition~\ref{assum:slowConvergence}, implies that we only require $n^{1/4}$--convergence of our estimates of the nuisance parameters $\pi_j$, $\tau_j$, as opposed to usual $n^{1/2}$--convergence. Using advanced machine learning tools, this condition can be easily satisfied. Condition~\ref{assum:Donsker}, is a standard condition in empirical process theory~\citep{vaart1996weak,vaart1998asymptotic}. It depends on the underlying complexity of our nonparametric modeling class for functions $\tau_j$ and $\pi_j$. With the above conditions, we now establish asymptotic linearity of $\wh{\theta}_{j,n}^{tl}$ and $\wh{\theta}_{j,n}^{dr}$ with EIC \eqref{eqn:eicTheta}.

\begin{theorem}
	\label{thm:one}
	For data $\bs{X}_i = (O_i, E_i, C_{1i}, \ldots, C_{pi})\sim P^0$, under Conditions \ref{assum:slowConvergence} and \ref{assum:Donsker},
	\begin{equation*}
		\label{eqn:main}
		\wh{\theta}_{j,n}^{tl}  - \theta_j(P^0) = \frac{1}{n}\sum_{i=1}^{n} D_{\theta_j(P^0)}(\bs{X}_i) + o_p(n^{-1/2}),
	\end{equation*}
	where $D_{\theta_j(P)}$ is the EIC~\eqref{eqn:eicTheta}. A similar result holds with $\wh{\theta}_{j,n}^{tl}$ replaced by $\wh{\theta}_{j,n}^{dr}$.
\end{theorem}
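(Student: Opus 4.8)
The plan is to prove both asymptotic expansions at once through the decomposition standard for doubly robust and TMLE estimators \citep{laan2011targeted}. Write $\bar D_{\pi,\tau}(\bs{x}) = o\,\pi(c_j) + \tau(c_j)\{I(e=1)-\pi(c_j)\}$, so that the EIC satisfies $D_{\theta_j(P)} = \bar D_{\pi_j(\cdot;P),\tau_j(\cdot;P)} - \theta_j(P)$ and, with $\pi_j^0,\tau_j^0$ the truth, $\expec_{P^0}\bar D_{\pi_j^0,\tau_j^0} = \theta_j(P^0)$. The first step is to observe that each estimator is an empirical mean of $\bar D$ evaluated at its fitted nuisances: rearranging \eqref{eqn:DRestimate} gives $\wh{\theta}^{dr}_{j,n} = \expec_{P_n}\bar D_{\wh{\pi}_{j,n},\wh{\tau}_{j,n}}$ outright, while for the TMLE the fluctuation submodels of Steps~2(a)--(b) are constructed (as verified in Web Appendix B) so that their score equations span the EIC \eqref{eqn:eicTheta}; hence at convergence $\expec_{P_n}D_{\theta_j(\wh{P}_K)} = 0$, i.e.\ $\wh{\theta}^{tl}_{j,n} = \expec_{P_n}\bar D_{\wh{\pi}^{K}_j,\wh{\tau}^{K}_j}$. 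Because the updates fluctuate one-dimensionally with uniformly bounded clever covariates $H_1^k,H_2^k$ and step sizes converging to $0$, the targeted nuisances $\wh{\pi}^{K}_j,\wh{\tau}^{K}_j$ inherit the $o_p(n^{-1/4})$ sup-norm rate of Condition~\ref{assum:slowConvergence} and membership in a fixed $P^0$-Donsker class from Condition~\ref{assum:Donsker}; it therefore suffices to argue for a generic pair $(\wh{\pi},\wh{\tau})$ obeying those two conditions, with $\wh{\theta}_{j,n} = \expec_{P_n}\bar D_{\wh{\pi},\wh{\tau}}$ (for $\wh{\theta}^{dr}_{j,n}$ this pair is simply the initial fit).

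Granting this, decompose
\begin{align*}
	\wh{\theta}_{j,n} - \theta_j(P^0)
	&= (\expec_{P_n}-\expec_{P^0})\,\bar D_{\pi_j^0,\tau_j^0}
	+ (\expec_{P_n}-\expec_{P^0})\big(\bar D_{\wh{\pi},\wh{\tau}} - \bar D_{\pi_j^0,\tau_j^0}\big) \\
	&\quad + \big(\expec_{P^0}\bar D_{\wh{\pi},\wh{\tau}} - \theta_j(P^0)\big).
\end{align*}
The first term equals $\tfrac1n\sum_{i=1}^n D_{\theta_j(P^0)}(\bs{X}_i)$, since $(\expec_{P_n}-\expec_{P^0})$ annihilates the constant $\theta_j(P^0)$ and $\expec_{P^0}D_{\theta_j(P^0)}=0$; this is the asserted leading term, so it remains to kill the other two. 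The third term is the double-robustness ``second-order remainder'': conditioning on $C_j$ and using $\expec_{P^0}(O\mid C_j)=\tau_j^0(C_j)$ and $\expec_{P^0}(I(E=1)\mid C_j)=\pi_j^0(C_j)$ collapses it to the exact product identity $-\expec_{P^0}[\{\wh{\tau}(C_j)-\tau_j^0(C_j)\}\{\wh{\pi}(C_j)-\pi_j^0(C_j)\}]$ (the identity behind the remark after \eqref{eqn:DRestimate}, proved in Web Appendix A), which by Cauchy--Schwarz is at most $\|\wh{\tau}-\tau_j^0\|_\infty\,\|\wh{\pi}-\pi_j^0\|_\infty = o_p(n^{-1/4})\,o_p(n^{-1/4}) = o_p(n^{-1/2})$ under Condition~\ref{assum:slowConvergence}. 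The second term is the empirical-process remainder: Condition~\ref{assum:Donsker} places $\bar D_{\wh{\pi},\wh{\tau}}$ in a fixed $P^0$-Donsker class with probability tending to $1$, and since $\bar D_{\wh{\pi},\wh{\tau}} - \bar D_{\pi_j^0,\tau_j^0} = (\wh{\pi}-\pi_j^0)(O-\wh{\tau}) + (\wh{\tau}-\tau_j^0)(I(E=1)-\pi_j^0)$, Condition~\ref{assum:slowConvergence} (with $\expec_{P^0}O^2<\infty$, assumed throughout) gives $\|\bar D_{\wh{\pi},\wh{\tau}} - \bar D_{\pi_j^0,\tau_j^0}\|_{L^2(P^0)} = o_p(1)$; asymptotic equicontinuity of the empirical process over a Donsker class \citep[Lemma~19.24]{vaart1998asymptotic} then forces this term to be $o_p(n^{-1/2})$. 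Collecting the three pieces yields the displayed expansion for both estimators.

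I expect the empirical-process step to be the main obstacle: there $(\expec_{P_n}-\expec_{P^0})$ is applied to an integrand that is itself data-dependent through $\wh{\pi},\wh{\tau}$, which is exactly what Condition~\ref{assum:Donsker} is there to license, and the reason the sup-norm rates of Condition~\ref{assum:slowConvergence} must be converted into $L^2(P^0)$-consistency of $\bar D_{\wh{\pi},\wh{\tau}}$. A secondary point requiring care, specific to the TMLE (for the one-step estimator $\wh{\theta}^{dr}_{j,n} = \expec_{P_n}\bar D_{\wh{\pi},\wh{\tau}}$ is a tautology), is verifying that the converged fluctuated nuisances still satisfy Conditions~\ref{assum:slowConvergence}--\ref{assum:Donsker}; this is routine given the one-dimensional, uniformly bounded fluctuations but should be spelled out.
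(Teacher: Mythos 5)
Your proposal is correct and follows essentially the same route as the paper: the same three-term decomposition (leading linear term, empirical-process term controlled via Condition~\ref{assum:Donsker} and Lemma~19.24 of \cite{vaart1998asymptotic}, and the exact product remainder $\expec_{P^0}[\{\tau_j^0-\wh{\tau}\}\{\wh{\pi}-\pi_j^0\}]$ killed by Condition~\ref{assum:slowConvergence}), with the doubly robust and TMLE cases unified by writing each estimator as an empirical mean of the uncentered EIC at the fitted nuisances. The only presentational difference is that the paper isolates the algebraic identity and the negligibility claims as separate lemmas (Lemmas~\ref{lemma:Expansion} and \ref{lemma:AsympcNeg}) before cancelling the $-P^0_nD_{\theta_j(P)}$ term estimator by estimator, whereas you absorb that cancellation into the definition of the decomposition from the start.
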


The above theorem establishes asymptotic linearity, which implies asymptotic normality by the CLT and asymptotic efficiency by use of the EIC. By the delta method, we can then obtain a similar result for our parameters $\psi_j$ and $\phi_j$, this is formalized in the following corollary. 

\begin{corollary}
	\label{corr:main}
	Under the conditions of Theorem~\ref{thm:one}, 
	\begin{align*}
		\label{eqn:ExpansionMain}
		\left[\begin{array}{c}
			\wh{\psi}^{tl}_{j,n}\\
			\wh{\phi}^{tl}_{j,n}
		\end{array}\right] - 		\left[\begin{array}{c}
			\psi_j(P^0)\\
			\phi_j(P^0)
		\end{array}\right] = \frac{1}{n} \sum_{i=1}^{n} \left[\begin{array}{c}
			D_{\psi_j(P^0)}(\bs{X}_i)\\
			D_{\phi_j(P^0)}(\bs{X}_i)
		\end{array}\right] + o_p(n^{-1/2}),
	\end{align*}
	where the influence curves $D_{\psi_j(P^0)}$ and $D_{\phi_j(P^0)}$ are defined as 
	\begin{align*}
		D_{\psi_j(P^0)} %&=  &&\frac{\mu_O(P^0)\{1-\mu_E(P^0)\} D_{\theta_{j}(P^0)}}{\mu_E(P^0)\{\mu_O(P^0) - \theta_{j}(P^0)\}^2}  \\
		%& \ &&- \frac{\theta_j(P^0)\{1- \mu_E(P^0)\}D_{\mu_O(P^0)}}{\mu_E(P^0)\{\mu_O(P^0) - \theta_{j}(P^0)\}^2} - %\frac{\theta_j(P^0)D_{\mu_E(P^0)}}{\{\mu_E(P^0)\}^2\{\mu_O(P^0) - \theta_j(P^0)\} }, \\
		%&=  &&\frac{\mu_O(P^0)\psi_j(P^0) D_{\theta_{j}(P^0)}}{\theta_j(P^0)\{\mu_O(P^0) - \theta_{j}(P^0)\}} - \frac{ \psi_j(P^0)D_{\mu_O(P^0)} %}{\{\mu_O(P^0) - \theta_{j}(P^0)\}} - \frac{\{1-\mu_E(P^0)\} \psi_j(P^0)D_{\mu_E(P^0)}}{\{\mu_E(P^0)\} }.\\
		&= &&\frac{\psi_j(P^0)}{\mu_O(P^0) - \theta_{j}(P^0)} \left\{ \frac{\mu_O(P^0)D_{\theta_j(P^0)}}{\theta_j(P^0)} - {D_{\mu_O(P^0)}} \right\} - \left\{\mu^{-1}_E(P^0) - 1\right\}\psi_j(P^0)D_{\mu_E(P^0)},
	\end{align*}
	\begin{align*}
		D_{\phi_j(P^0)} & = \frac{D_{\theta_j(P^0)}}{\mu_E(P^0)\{1 - \mu_E(P^0)\}} - \frac{D_{\mu_O(P^0)}}{1 - \mu_E(P^0)} - \frac{\{\mu_{O}(P^0) - \theta_j(P^0)\}D_{\mu_E(P^0)}}{\{1  - \mu_E(P^0)\}^2} - \frac{\theta_j(P^0)D_{\mu_E(P^0)}}{\mu_E(P^0)},
	\end{align*}
	where $\mu_X(P) = \mathbb{E}_PX$ is the expectation of $X\sim P$, and $D_{\mu_X(P)}(x) = x - \mu_X(P)$ is its EIC. 	
\end{corollary}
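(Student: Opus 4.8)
The plan is to deduce the corollary from Theorem~\ref{thm:one} by a finite-dimensional delta method applied through the reparametrization \eqref{eqn:psiTheta}--\eqref{eqn:phiTheta}. Going through the scalars $\theta_j(P)$, $\mu_O(P)=\expec_P(O_i)$, $\mu_E(P)=\expec_P(E_i)$ is the natural route: $\psi_j$ and $\phi_j$ are smooth functions of these three numbers, and Theorem~\ref{thm:one} already supplies the only nontrivial ingredient, namely the asymptotically linear expansion of $\wh{\theta}^{tl}_{j,n}$ with influence curve $D_{\theta_j(P^0)}$. Write $\theta_j^0=\theta_j(P^0)$, $\mu_O^0=\mu_O(P^0)$, $\mu_E^0=\mu_E(P^0)$, and let $\wh{\mu}_O=\expec_{P_n}(O_i)$, $\wh{\mu}_E=\expec_{P_n}(E_i)$ be the sample means sitting inside the plug-in estimators $\wh{\psi}^{tl}_{j,n}$, $\wh{\phi}^{tl}_{j,n}$.

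First I would record joint asymptotic linearity of the triple $(\wh{\theta}^{tl}_{j,n},\wh{\mu}_O,\wh{\mu}_E)$ at $P^0$: Theorem~\ref{thm:one} covers the first coordinate, while the sample means are exactly asymptotically linear, $\wh{\mu}_O-\mu_O^0=n^{-1}\sum_i D_{\mu_O(P^0)}(\bs{X}_i)$ and $\wh{\mu}_E-\mu_E^0=n^{-1}\sum_i D_{\mu_E(P^0)}(\bs{X}_i)$, with $D_{\mu_O(P^0)}(\bs{x})=o-\mu_O^0$ and $D_{\mu_E(P^0)}(\bs{x})=e-\mu_E^0$. Stacking the three expansions -- the remainders add to an $o_p(n^{-1/2})$ vector -- gives asymptotic linearity of this $\mathbb{R}^3$-valued estimator with influence curve $(D_{\theta_j(P^0)},D_{\mu_O(P^0)},D_{\mu_E(P^0)})$, and in particular its estimation error is $O_p(n^{-1/2})$ by the CLT. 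Next I would define $g=(g_\psi,g_\phi):\mathbb{R}^3\to\mathbb{R}^2$ by $g_\psi(t,a,b)=\{t/b\}/\{(a-t)/(1-b)\}$ and $g_\phi(t,a,b)=t/b-(a-t)/(1-b)$, so that \eqref{eqn:psiTheta}--\eqref{eqn:phiTheta} say $(\psi_j(P^0),\phi_j(P^0))=g(\theta_j^0,\mu_O^0,\mu_E^0)$ and, by construction of the estimators, $(\wh{\psi}^{tl}_{j,n},\wh{\phi}^{tl}_{j,n})=g(\wh{\theta}^{tl}_{j,n},\wh{\mu}_O,\wh{\mu}_E)$. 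The map $g$ is continuously differentiable on a neighbourhood of $(\theta_j^0,\mu_O^0,\mu_E^0)$ as long as its denominators stay bounded away from zero there, i.e.\ $0<\mu_E^0<1$ and $\mu_O^0\neq\theta_j^0$ (equivalently $\expec\{\tau_j(C_{ji})\mid E_i=0\}\neq 0$); I would state these non-degeneracy conditions explicitly.

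Then I would invoke the multivariate delta method: a first-order Taylor expansion of $g$ about $(\theta_j^0,\mu_O^0,\mu_E^0)$ has remainder $o_p(n^{-1/2})$, since $Dg$ is continuous and the estimation error is $O_p(n^{-1/2})$, whence
\[
\begin{bmatrix}\wh{\psi}^{tl}_{j,n}\\ \wh{\phi}^{tl}_{j,n}\end{bmatrix} - \begin{bmatrix}\psi_j(P^0)\\ \phi_j(P^0)\end{bmatrix} = Dg(\theta_j^0,\mu_O^0,\mu_E^0)\begin{bmatrix}\wh{\theta}^{tl}_{j,n}-\theta_j^0\\ \wh{\mu}_O-\mu_O^0\\ \wh{\mu}_E-\mu_E^0\end{bmatrix} + o_p(n^{-1/2}).
\]
Substituting the three asymptotically linear expansions from the previous paragraph into the right-hand side gives the claimed expansion, with the influence curve of $(\wh{\psi}^{tl}_{j,n},\wh{\phi}^{tl}_{j,n})$ equal to $Dg(\theta_j^0,\mu_O^0,\mu_E^0)$ applied to the vector $(D_{\theta_j(P^0)},D_{\mu_O(P^0)},D_{\mu_E(P^0)})$; bivariate asymptotic normality follows from the CLT applied to the i.i.d.\ sum. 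The same argument with $\wh{\theta}^{dr}_{j,n}$ in place of $\wh{\theta}^{tl}_{j,n}$ (using the companion statement of Theorem~\ref{thm:one}) handles the one-step estimators, and the identical scheme with $\theta_j$ replaced by its grouped analogue covers $\Psi_j$ and $\Phi_j$.

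What remains is to compute the $2\times3$ Jacobian $Dg$ at $(\theta_j^0,\mu_O^0,\mu_E^0)$ and simplify, which is routine: for $g_\phi$ one gets $\partial_t g_\phi=\{b(1-b)\}^{-1}$, $\partial_a g_\phi=-(1-b)^{-1}$, $\partial_b g_\phi=-t/b^2-(a-t)/(1-b)^2$, and for $g_\psi$ logarithmic differentiation of $g_\psi=t(1-b)/\{b(a-t)\}$ gives $\partial_t g_\psi=g_\psi\,a/\{t(a-t)\}$, $\partial_a g_\psi=-g_\psi/(a-t)$, $\partial_b g_\psi=-g_\psi/\{b(1-b)\}$. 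Plugging in $(\theta_j^0,\mu_O^0,\mu_E^0)$, multiplying against $(D_{\theta_j(P^0)},D_{\mu_O(P^0)},D_{\mu_E(P^0)})$, and collecting terms (writing $\mu_O^0-\theta_j^0$ in the common denominators) produces the stated expressions for $D_{\psi_j(P^0)}$ and $D_{\phi_j(P^0)}$. I do not expect a genuine obstacle, given Theorem~\ref{thm:one}; the only points that need care rather than difficulty are (i) routing the argument through the scalar reparametrization instead of attempting a functional delta method directly on $P\mapsto\psi_j(P)$, and (ii) making the non-degeneracy requirements $0<\mu_E^0<1$ and $\mu_O^0\neq\theta_j^0$ explicit -- the latter being exactly what makes $g$ differentiable at the truth and, indeed, what makes the ratio score $\psi_j$ well-defined.
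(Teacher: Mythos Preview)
Your proposal is correct and matches the paper's own route exactly: the paper treats the corollary as an immediate delta-method consequence of Theorem~\ref{thm:one}, passing through the joint asymptotically linear expansion of $(\wh{\theta}_{j,n},\overline{O},\overline{E})$ (stated as Lemma~\ref{lemma:AsympcNeg} in the appendix) and then differentiating the maps \eqref{eqn:psiTheta}--\eqref{eqn:phiTheta}. Your write-up is in fact more detailed than the paper's---you make the non-degeneracy conditions $0<\mu_E^0<1$, $\mu_O^0\neq\theta_j^0$ explicit and carry out the Jacobian computation---whereas the paper simply invokes ``the delta method'' without further argument.
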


\section{Numerical Experiments}
\label{sec:simulations}

In this section, we  assess the empirical performance of our estimators via simulations. We implement our doubly robust estimator~\eqref{eqn:DRestimate} and targeted maximum likelihood estimator~\eqref{eqn:tmleEst}, henceforth referred to as hdCS-1 and hdCS-2, respectively. We divide our numerical experiments into three sections, namely experiments for low-dimensional data, high-dimensional data and misspecified modeling. In the low-dimensional setting, we compare our proposed methods to  C-TMLE~\citep{laan2010collaborative}, SC-TMLE~\citep{ju2019scalable} and outcome adaptive lasso (OAL)~\citep{shortreed2017outcome}. For the high-dimensional setting we exclude SC-TMLE and C-TMLE. The reason for exclusion is high-computational time as C-TMLE and SC-TMLE do not scale well with increasing dimension $p$; even in the low-dimensional case we observed SC-TMLE to be 10-25 times slower than hdCS or OAL. Finally, we note that OAL relies heavily on correct specification of the outcome and propensity score model, in the final section we consider deviations from these modeling assumptions.

\subsection{Low-Dimensional Data}
\label{sec:lowDimSim}

For this simulation study, we generated data with sample size, $n=500$ and dimension $p = 30$. The set of potential confounders are generated as $\bs{C}_i = (C_{i1},\ldots, C_{ip}) \sim \mathcal{N}_p(\bs{0}, \Sigma)$ where $\Sigma_{jk} = \rho^{|j-k|}$ with $\rho = 0.3$ or $0$ (with the convention $0^0 = 1$). We generate a binary exposure $E_i$, from a Bernoulli distribution with $\mathrm{logit}\{Pr(E_i = 1|\bs{C}_i) \}  = \sum_{j=1}^{p} \alpha_j C_{ij}$ and, a continuous response $Y$, such that $Y_i = \theta E_i + \sum_{j=1}^{p} \beta_j C_{ij} + \e_i$ where $\e_i\sim \mathcal{N}(0,1)$. We consider two choices for the average treatment effect $\theta = 0$ or 2. From the 30 potential confounders, the first 5 are true confounders (i.e., $\beta_j = 0.6$ and $\alpha_j = 1$ for $j=1,\ldots,5$), the next 5 are precision variables (i.e., $\beta_j = 0.6$ and $\alpha_j = 0$ for $j=6,\ldots, 10$), and the next 5 are instrumental variables (i.e., $\beta_j = 0$ and $\alpha_j = 1$ for $j=11,\ldots, 15$). The remaining 15 variables were spurious.

For brevity, we implement hdCS-1 and hdCS-2 to estimate the difference scores, $\phi_j$, only. We also generated 90\% confidence intervals and selected a variable if the confidence interval contained 0. To speed up computation, for C-TMLE, SC-TMLE and hdCS all initial estimates (e.g. of the outcome model or propensity score) were obtained using (generalized) linear models. All competing methods were implemented with their default setting in \texttt{R}~\citep{rct2014r}. Additionally, for implementing SC-TMLE we use the partial correlation pre-ordering strategy as proposed by \cite{ju2019scalable}.

In Figure~\ref{fig:simLow}, we present the sensitivity and specificity of each method to assess the confounder selection accuracy. We observe little difference between the estimates obtained by hdCS-1 and hdCS-2, both methods exhibit the best performance compared to competing methods. The C-TMLE and SC-TMLE proposals have very poor confounder selection performance. The OAL performs well but has a slightly lower specificity; this is because OAL aims to select true confounders and precision variables in order to obtain estimates for the average treatment effect. In the bottom panel of Figure~\ref{fig:simLow}, we show the average values of our estimates of ${\phi}_j$. When $\theta = 0$ and $\rho = 0$, we clearly see all confounders having a score, $\wh{\phi}_j\approx 0.25$, and all other variables have an estimated score of $0$. When $\theta = 3$, we observe one of the drawbacks of the bross formula, instrumental variables do not have a non-zero value for $\phi_j$. Finally, introducing correlation among the covariates leads to variability in values of $\wh{\phi}_j$. 
%The median run time for hdCS-1, hdCS-2, C-TMLE, SC-TMLE, and OAL were 0.3, 0.6, 30, 8, and 0.8 seconds, respectively.

\begin{figure}
	\centering
	\includegraphics[width=\linewidth]{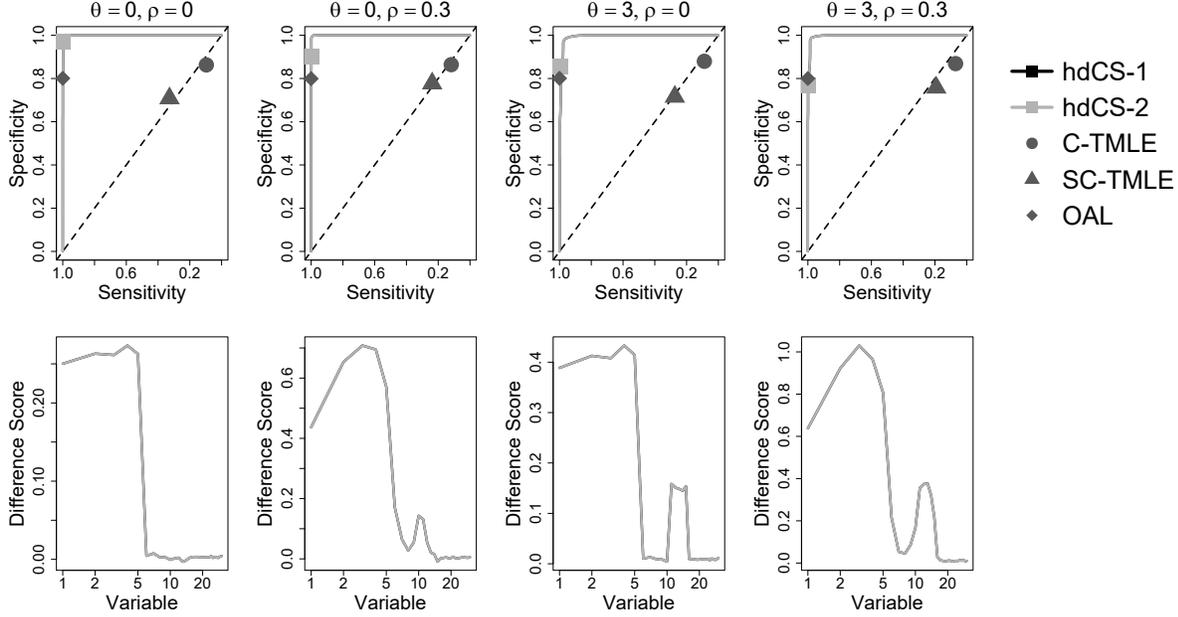}
	\caption{Results of numerical experiments for low dimensional data. \emph{Top}: ROC curves for hdCS-1 and hdCS-2, obtained by varying $k$: the number of variables selected based on difference score. We also plot the sensitivity and specificity of the selected set based for hdCS-1, hdCS-2, C-TMLE, SC-TMLE, and OAL. \emph{Bottom}: Average estimated difference scores, for hdCS-1 and hdCS-2.} 
	\label{fig:simLow}
\end{figure}

\subsection{High-Dimensional Data}
\label{sec:highDimSim}

For this simulation, we generated data with sample size, $n=500$ and dimension $p = 1000$. Data is generated as in the previous section, only now we have 985 spurious variables. In Figure~\ref{fig:simHigh}, we observe very similar performance of both methods as in the low-dimensional case. With a high number of spurious variables, OAL exhibited a higher specificity compared to the low-dimensional case. 

\begin{figure}
	\centering
	\includegraphics[width=\linewidth]{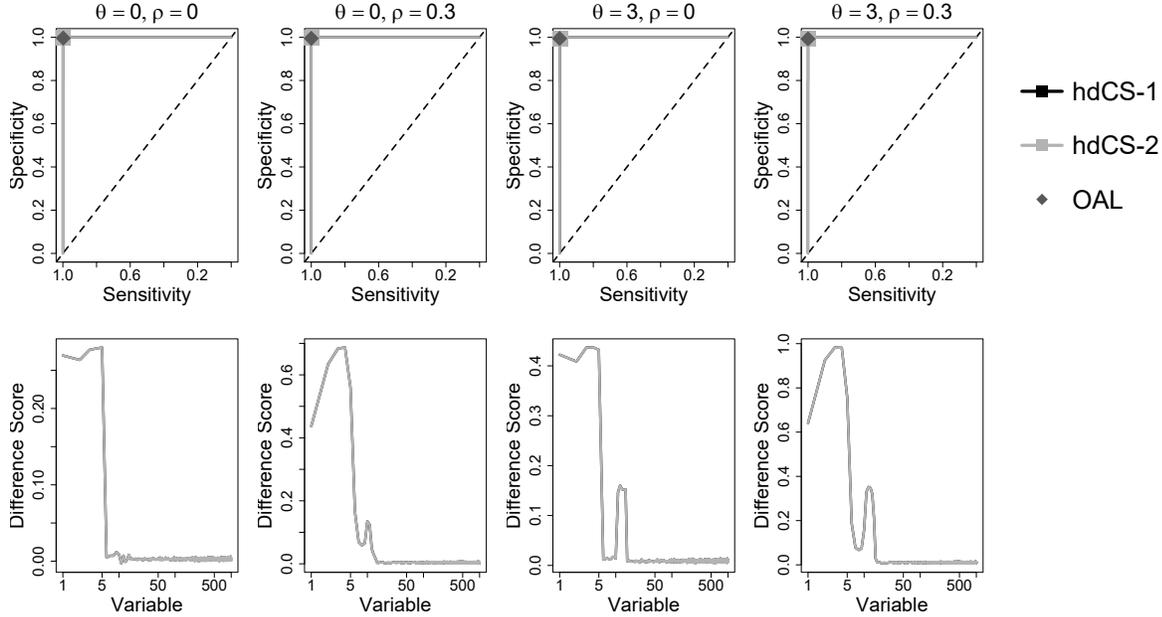}
	\caption{Results of simulation study for high-dimensional data (see label of Figure~\ref{fig:simLow}).}
	\label{fig:simHigh}
\end{figure} 

\subsection{Misspecified Modeling}
\label{sec:simModMis}

For this simulation study, we generated data with $n=500$ and $p=30$ and $50$. The set of potential confounders were generated as $\bs{C}_i\sim \mathcal{N}_p(\bs{0},I)$. We generate a binary exposure $E_i$ and continuous response $Y_i$ where
\begin{align*}
	\mathrm{logit}\{ \Pr (E_i=1|\bs{C}_i)\}&= 	-15 + \sum_{j=1}^{5}3\sin(3C_{ij}) + \sum_{j=11}^{15} (C_{ij}^3 - C_{ij} + 3),\\
	Y_i &= \theta E_i + \sum_{j=1}^{5}1.8\sin(3C_{ij}) + \sum_{j=6}^{10}1.8\cos(4C_{ij}) + \e_i,
\end{align*}
where $\e_i\sim \mathcal{N}(0,1)$ and $\theta\in \{0,3\}$. Allowing for modeling flexibility, we implemented hdCS using polynomial regression (with degree 6) for initial estimates of $Q_j$ and $\pi_j$. We implemented C-TMLE and SC-TMLE via a super-learner using generalized linear and generalized additive models as candidates. As OAL is unable to account for modeling non-linear functions, it was implemented using its default \texttt{R} implementation.

We present the confounder selection properties for each method in terms of sensitivity and specificity in Figure~\ref{fig:simModMis}. In this case, we observe the clear disadvantage of using OAL under a misspecified model. Both hdCS methods seem to do very well exhibiting a high specificity and sensitivity. Looking at the average estimated values for $\phi_j$ in Figure~\ref{fig:simModMis}, we observe some bias in the estimates for non-confounders, particularly spurious variables. This suggests using a more flexible modeling approach such as nonparametric regression, however, as evident from the average sensitivity and specificity values, 95\% confidence intervals for $\phi_j$ contain 0 frequently. 
\begin{figure}
	\centering
	\includegraphics[width=\linewidth]{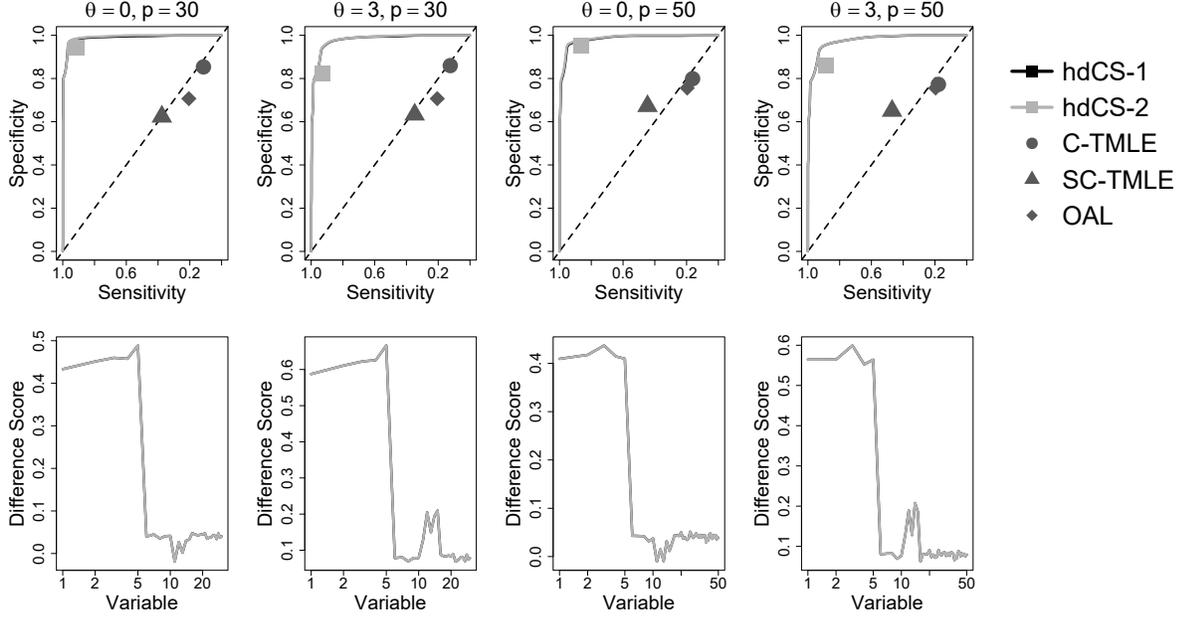}
	\caption{Results of numerical experiments with model misspecification (see label of Figure~\ref{fig:simLow}).  }
	\label{fig:simModMis}
\end{figure} 

\section{Analysis of Boston Housing Data}
\label{sec:bostonData}

In this section, we analyze the Boston housing dataset originally analyzed in \cite{harrisonjr1978hedonic} and more recently analyzed for variable importance but in the context of prediction modeling~\citep{doksum1995nonparametric,bi2003dimensionality,friedman2008predictive,williamson2017nonparametric}. The dataset is publicly available in the \texttt{R} package \texttt{MASS}~\citep{venables2002modern}. It contains the median value of owner occupied homes in the Boston metropolitan area for $n=506$ neighborhoods defined by the 1970 census tracts. We study the relationship between home value and proportion of black residents in a neighborhood. Our binary exposure variable is an indicator of proportion of black residents being less than the median proportion. We expect neighborhoods with a high proportion of black residents to have lower median home value~\citep{perry2018devaluation}. 

The dataset contains twelve variables which can be potential confounders, we divide them into four groups as identified in \cite{williamson2017nonparametric}. The first group consists of neighborhood features: the proportion of the population of lower socio-economic status~(\texttt{lstat}), referring to adults without any high school education or male workers classified as laborers; the crime rate~(\texttt{crim}); the proportion of a town’s residential land zoned for lots greater than 25,000 square feet~(\texttt{zn}); the proportion of non-retail business acres per town~(\texttt{indus}); the full-value property-tax rate~(\texttt{tax}); the pupil-teacher ratio by school district~(\texttt{ptratio}); and an indicator of whether the tract of land borders the Charles River~(\texttt{chas}). The second group consists of accessibility features: the weighted distance to five employment centers in the Boston region~(\texttt{dis}); and an index of accessibility to radial
highways~(\texttt{rad}). The third group consists of structural features: the average number
of rooms in owner units; and the proportion of owner units built prior to 1940. The final group consists
of one variable alone: the nitrogen oxide concentration, a measure of air pollution.

We implement hdCS-1, hdCS-2, C-TMLE, SC-TMLE and OAL to estimate the confounder set. We allow for flexible modeling of $Q_j$ and $\pi_j$ by using polynomial regression of degree three. For C-TMLE and SC-TMLE, we use a super-learner with linear and additive models as potential learning algorithms. For this analysis, we also implement hdCS for groups of variables defined by the four groups above. For each method, we use the selected variable set as adjustment variables for the linear model with response as the median home value and exposure as the indicator of black population (low vs. high proportion of black residents).   

The hdCS estimates of variable and group importance scores are presented in Figure~\ref{fig:bostonDat}. A noteworthy aspect of our analysis is that hdCS selected nine variables based on variable importance score but only three variables based on group scores. In Table~\ref{tab:bostonDat}, we show the estimated mean difference in median home value between the two groups~(a positive difference corresponds to a low home value in the group with  high proportion of black residents). The naive mean difference (corresponding to variables selected by C-TMLE), shows the opposite effect to what we expect, although the result is not statistically significant. Other competing methods show a positive mean difference however the estimate is not statistically significant at the $\alpha=0.05$ level. Both hdCS and group hdCS lead to statistically significant adjusted mean difference in home value. A striking feature of this analysis is how the effect size is very similar for both hdCS and group hdCS despite having very different adjustment variables.

\begin{table}
	\small\sf \centering
	\caption{Estimates of adjusted mean difference in median home value based on linear models with adjustment variables selected by hdCS and competing methods.}
	\label{tab:bostonDat}
	\bgroup
	\def\arraystretch{1.5}%
	\begin{tabular} {|l | r l | p{40mm} |}
		\hline
		\textbf{Method} & \multicolumn{2}{c|}{\textbf{Mean difference (USD)} } & \textbf{Adjustment variables} \\
		\hline
		\hline
		C-TMLE/Unadjusted &  -556\$ & ($p = 0.497$) & \texttt{-}  \\
		\hline
		SC-TMLE & 308\$ & ($p = 0.686$) & \texttt{crim}\\
		\hline
		OAL & 827\$ & ($p= 0.069$) & \texttt{chas  nox  rm  dis \newline ptratio  lstat}\\
		\hline
		\hline
		hdCS & 1882\$ & ($p<0.001$) & \texttt{crim  zn  indus  nox \newline  age  dis  rad  tax  lstat}\\
		\hline
		Group hdCS & 1853\$ & ($p=0.013$) & Air Quality~(\texttt{nox})\newline  Access~(\texttt{dis  rad})\\
		\hline
	\end{tabular}
	\egroup
\end{table}

\begin{figure}
	\centering
	\includegraphics[width=\linewidth]{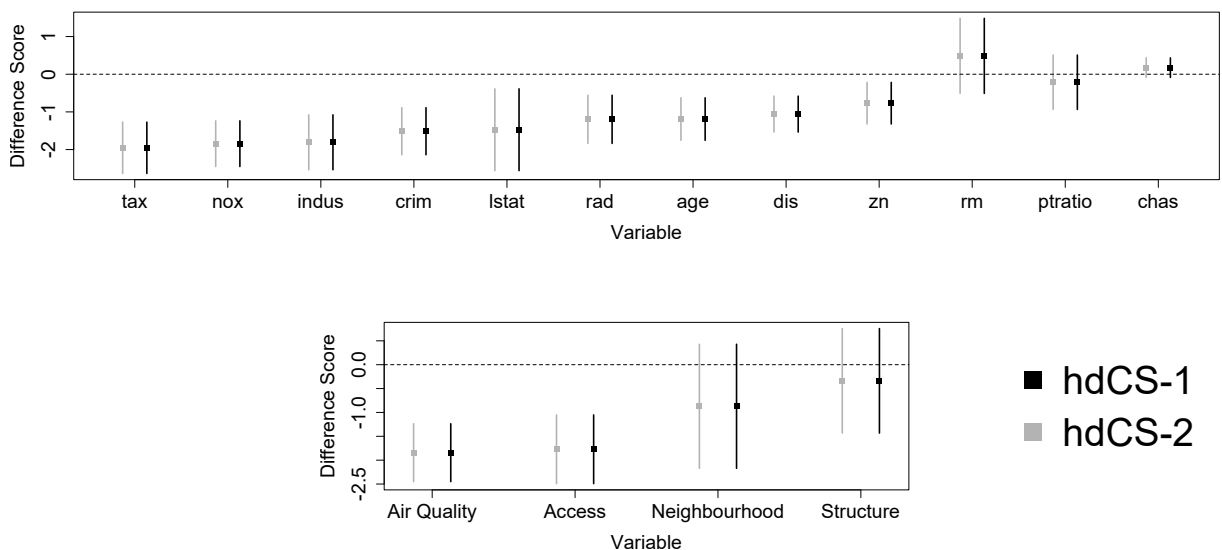}
	\caption{\emph{Top}: Estimated values of variable importance score, $\phi_j$, for $j=1,\ldots,12$ with 90\% confidence intervals for hdCS-1 and hdCS-2. Variables are sorted according to their rank determined by the absolute value of estimated importance score. \emph{Bottom}: Estimated values of group importance scores, $\Phi_j$, for $j=1,\ldots,4$ with 90\% confidence intervals. Variable groups are sorted according to the absolute value of estimated group importance score.}
	\label{fig:bostonDat}
\end{figure}

\section{Conclusion}
\label{sec:conclusion}

In this manuscript we present hdCS a novel technique for ranking and selection of confounders for high-dimensional data. We presented estimators which are efficient, asymptotically linear and doubly robust to model misspecification. Further model misspecification can be mitigated by using flexible, nonparametric modeling techniques and state-of-the-art machine learning tools. Based on results from the theory of influence functions, we can establish asymptotic normality and efficiency for our estimators. We demonstrated the practical advantages of our proposals via numerical studies on simulated and real data. 

For future work, we propose to tackle the issues and limitations of the Bross formula. Specifically, we propose a theoretical derivation of the parameters $\psi_j$ and $\phi_j$ under more general data generating schemes. The ultimate goal being, to identify necessary and sufficient conditions under which ranking based on scores $\psi_j$ or $\phi_j$ are valid. Another potential research direction is extending our methodology beyond binary exposure variables. One possible approach is to define a \emph{confounding impact curve}: $\zeta_{j}(e) = \expec \{ \tau_j(X_{ji}| E_i = e)\}$. As a special case we recover the parameters of this manuscript by noting $\psi_j = \zeta_{j}(1)/\zeta_j(0)$ and $\phi_j = \zeta_{j}(1) - \zeta_{j}(0)$. For the general case we could define a measure of confounding impact as $\max_e \zeta_{j}(e) - \min_e \zeta_j(e)$. Other functions of $\zeta_j$ could be considered, and a comparative study of various measures of confounding impact is a promising direction for future research.  

All methods presented in this paper, have been implemented in the R package \texttt{hdCS} which will soon be made available on github.

%\begin{acks}
%Acknolwdgements here
%\end{acks}

%\begin{dci}
%	The Authors declare that there is no conflict of interest.
%\end{dci}
%
%\begin{funding}
%A.H. was partially funded by the STATLAB-CRM-CANSSI postdoctoral grant. This work was partially funded by Canadian Institutes of Health Research (CIHR)[FDN-143297]; and the Canadian Statistical Sciences Institute (CANSSI) [Collaborative Research Team Project “Statistical Analysis of Large Administrative Health Databases: Emerging Challenges and Strategies”]. R.P. Holds the Albert Boehringer I Chair in Pharmacoepidemiology.
%\end{funding}
%
%\begin{sm}
%The online supplementary materials contain technical details, proofs and additional details regarding the TMLE algorithm. 
%\end{sm}

%%Vancouver (numbered)
\bibliographystyle{plainnat}
\bibliography{refnew.bib}

\begin{thebibliography}{53}
\providecommand{\natexlab}[1]{#1}
\providecommand{\url}[1]{\texttt{#1}}
\expandafter\ifx\csname urlstyle\endcsname\relax
  \providecommand{\doi}[1]{doi: #1}\else
  \providecommand{\doi}{doi: \begingroup \urlstyle{rm}\Url}\fi

\bibitem[Bi et~al.(2003)Bi, Bennett, Embrechts, Breneman, and
  Song]{bi2003dimensionality}
Jinbo Bi, Kristin Bennett, Mark Embrechts, Curt Breneman, and Minghu Song.
\newblock Dimensionality reduction via sparse support vector machines.
\newblock \emph{Journal of Machine Learning Research}, 3\penalty0
  (Mar):\penalty0 1229--1243, 2003.

\bibitem[Bickel et~al.(1975)Bickel, Hammel, and O'Connell]{bickel1975sex}
Peter~J. Bickel, Eugene~A. Hammel, and J.~William O'Connell.
\newblock Sex bias in graduate admissions: data from berkeley.
\newblock \emph{Science}, 187\penalty0 (4175):\penalty0 398--404, 1975.

\bibitem[Bickel et~al.(1993)Bickel, Klaassen, Bickel, Ritov, Klaassen, and
  Wellner]{bickel1993efficient}
Peter~J. Bickel, Chris A.~J. Klaassen, Peter~J. Bickel, Ya'acov Ritov,
  J.~Klaassen, and Jon~A. Wellner.
\newblock \emph{Efficient and adaptive estimation for semiparametric models},
  volume~4.
\newblock Johns Hopkins University Press Baltimore, 1993.

\bibitem[Brookhart et~al.(2010)Brookhart, St\"{u}rmer, Glynn, Rassen, and
  Schneeweiss]{brookhart2010confounding}
M.~Alan Brookhart, Til St\"{u}rmer, Robert~J. Glynn, Jeremy Rassen, and
  Sebastian Schneeweiss.
\newblock Confounding control in healthcare database research: challenges and
  potential approaches.
\newblock \emph{Medical care}, 48:\penalty0 S114--S120, June 2010.
\newblock ISSN 1537-1948.
\newblock \doi{10.1097/MLR.0b013e3181dbebe3}.

\bibitem[Bross(1966)]{bross1966spurious}
Irwin~DJ Bross.
\newblock Spurious effects from an extraneous variable.
\newblock \emph{Journal of Clinical Epidemiology}, 19\penalty0 (6):\penalty0
  637--647, 1966.

\bibitem[Chambaz et~al.(2012)Chambaz, Neuvial, and van~der
  Laan]{chambaz2012estimation}
Antoine Chambaz, Pierre Neuvial, and Mark~J. van~der Laan.
\newblock Estimation of a non-parametric variable importance measure of a
  continuous exposure.
\newblock \emph{Electronic journal of statistics}, 6:\penalty0 1059--1099,
  2012.
\newblock ISSN 1935-7524.
\newblock \doi{10.1214/12-EJS703}.

\bibitem[Doksum et~al.(1995)Doksum, Samarov, et~al.]{doksum1995nonparametric}
Kjell Doksum, Alexander Samarov, et~al.
\newblock Nonparametric estimation of global functionals and a measure of the
  explanatory power of covariates in regression.
\newblock \emph{The Annals of Statistics}, 23\penalty0 (5):\penalty0
  1443--1473, 1995.

\bibitem[Enders et~al.(2018)Enders, Ohlmeier, and Garbe]{enders2018potential}
Dirk Enders, Christoph Ohlmeier, and Edeltraut Garbe.
\newblock The potential of high-dimensional propensity scores in health
  services research: an exemplary study on the quality of care for elective
  percutaneous coronary interventions.
\newblock \emph{Health services research}, 53:\penalty0 197--213, February
  2018.
\newblock ISSN 1475-6773.
\newblock \doi{10.1111/1475-6773.12653}.

\bibitem[Franklin et~al.(2014)Franklin, Schneeweiss, Polinski, and
  Rassen]{franklin2014plasmode}
Jessica~M. Franklin, Sebastian Schneeweiss, Jennifer~M. Polinski, and Jeremy~A.
  Rassen.
\newblock Plasmode simulation for the evaluation of pharmacoepidemiologic
  methods in complex healthcare databases.
\newblock \emph{Computational statistics \& data analysis}, 72:\penalty0
  219--226, 2014.

\bibitem[Franklin et~al.(2015)Franklin, Eddings, Glynn, and
  Schneeweiss]{franklin2015regularized}
Jessica~M. Franklin, Wesley Eddings, Robert~J. Glynn, and Sebastian
  Schneeweiss.
\newblock Regularized regression versus the high-dimensional propensity score
  for confounding adjustment in secondary database analyses.
\newblock \emph{American journal of epidemiology}, 182:\penalty0 651--659,
  October 2015.
\newblock ISSN 1476-6256.
\newblock \doi{10.1093/aje/kwv108}.

\bibitem[Franklin et~al.(2017)Franklin, Eddings, Austin, Stuart, and
  Schneeweiss]{franklin2017comparing}
Jessica~M. Franklin, Wesley Eddings, Peter~C. Austin, Elizabeth~A. Stuart, and
  Sebastian Schneeweiss.
\newblock Comparing the performance of propensity score methods in healthcare
  database studies with rare outcomes.
\newblock \emph{Statistics in medicine}, 36:\penalty0 1946--1963, May 2017.
\newblock ISSN 1097-0258.
\newblock \doi{10.1002/sim.7250}.

\bibitem[Friedman et~al.(2008)Friedman, Popescu,
  et~al.]{friedman2008predictive}
Jerome~H Friedman, Bogdan~E Popescu, et~al.
\newblock Predictive learning via rule ensembles.
\newblock \emph{The Annals of Applied Statistics}, 2\penalty0 (3):\penalty0
  916--954, 2008.

\bibitem[Greenland and Morgenstern(2001)]{greenland2001confounding}
S.~Greenland and H.~Morgenstern.
\newblock Confounding in health research.
\newblock \emph{Annual review of public health}, 22:\penalty0 189--212, 2001.
\newblock ISSN 0163-7525.
\newblock \doi{10.1146/annurev.publhealth.22.1.189}.

\bibitem[Greenland(2008)]{greenland2008invited}
Sander Greenland.
\newblock Invited commentary: variable selection versus shrinkage in the
  control of multiple confounders.
\newblock \emph{American journal of epidemiology}, 167:\penalty0 523--9;
  discussion 530--1, March 2008.
\newblock ISSN 1476-6256.
\newblock \doi{10.1093/aje/kwm355}.

\bibitem[Gruber and van~der Laan(2010)]{gruber2010application}
Susan Gruber and Mark~J. van~der Laan.
\newblock An application of collaborative targeted maximum likelihood
  estimation in causal inference and genomics.
\newblock \emph{The international journal of biostatistics}, 6:\penalty0
  Article 18, 2010.
\newblock ISSN 1557-4679.
\newblock \doi{10.2202/1557-4679.1182}.

\bibitem[Guertin et~al.(2016)Guertin, Rahme, Dormuth, and
  LeLorier]{guertin2016head}
Jason~R Guertin, Elham Rahme, Colin~R Dormuth, and Jacques LeLorier.
\newblock Head to head comparison of the propensity score and the
  high-dimensional propensity score matching methods.
\newblock \emph{BMC medical research methodology}, 16:\penalty0 22, February
  2016.
\newblock ISSN 1471-2288.
\newblock \doi{10.1186/s12874-016-0119-1}.

\bibitem[Harrison~Jr and Rubinfeld(1978)]{harrisonjr1978hedonic}
David Harrison~Jr and Daniel~L. Rubinfeld.
\newblock Hedonic housing prices and the demand for clean air.
\newblock \emph{Journal of Environmental Economics and Management}, 5\penalty0
  (1):\penalty0 81--102, March 1978.

\bibitem[He et~al.(2014)He, Mathews, Kalloo, and Hutfless]{he2014mining}
Danning He, Simon~C. Mathews, Anthony~N. Kalloo, and Susan Hutfless.
\newblock Mining high-dimensional administrative claims data to predict early
  hospital readmissions.
\newblock \emph{Journal of the American Medical Informatics Association :
  JAMIA}, 21:\penalty0 272--279, 2014.
\newblock ISSN 1527-974X.
\newblock \doi{10.1136/amiajnl-2013-002151}.

\bibitem[Huybrechts et~al.(2011)Huybrechts, Rothman, Silliman, Brookhart, and
  Schneeweiss]{huybrechts2011risk}
Krista~F Huybrechts, Kenneth~J Rothman, Rebecca~A Silliman, M~Alan Brookhart,
  and Sebastian Schneeweiss.
\newblock Risk of death and hospital admission for major medical events after
  initiation of psychotropic medications in older adults admitted to nursing
  homes.
\newblock \emph{Cmaj}, 183\penalty0 (7):\penalty0 E411--E419, 2011.

\bibitem[Ju et~al.(2017)Ju, Gruber, and {van der Laan}]{ju2017ctmle}
Cheng Ju, Susan Gruber, and Mark {van der Laan}.
\newblock \emph{{ctmle: Collaborative Targeted Maximum Likelihood Estimation}},
  2017.
\newblock URL \url{https://CRAN.R-project.org/package=ctmle}.
\newblock R package version 0.1.1.

\bibitem[Ju et~al.(2019{\natexlab{a}})Ju, Combs, Lendle, Franklin, Wyss,
  Schneeweiss, and van~der Laan]{ju2019propensity}
Cheng Ju, Mary Combs, Samuel~D. Lendle, Jessica~M. Franklin, Richard Wyss,
  Sebastian Schneeweiss, and Mark~J. van~der Laan.
\newblock Propensity score prediction for electronic healthcare databases using
  super learner and high-dimensional propensity score methods.
\newblock \emph{Journal of Applied Statistics}, 0\penalty0 (0):\penalty0 1--21,
  2019{\natexlab{a}}.
\newblock \doi{10.1080/02664763.2019.1582614}.
\newblock URL \url{https://doi.org/10.1080/02664763.2019.1582614}.

\bibitem[Ju et~al.(2019{\natexlab{b}})Ju, Gruber, Lendle, Chambaz, Franklin,
  Wyss, Schneeweiss, and van~der Laan]{ju2019scalable}
Cheng Ju, Susan Gruber, Samuel~D. Lendle, Antoine Chambaz, Jessica~M. Franklin,
  Richard Wyss, Sebastian Schneeweiss, and Mark~J. van~der Laan.
\newblock Scalable collaborative targeted learning for high-dimensional data.
\newblock \emph{Statistical methods in medical research}, 28:\penalty0
  532--554, February 2019{\natexlab{b}}.
\newblock ISSN 1477-0334.
\newblock \doi{10.1177/0962280217729845}.

\bibitem[Karim et~al.(2018)Karim, Pang, and Platt]{karim2018can}
Mohammad~Ehsanul Karim, Menglan Pang, and Robert~W. Platt.
\newblock Can we train machine learning methods to outperform the
  high-dimensional propensity score algorithm?
\newblock \emph{Epidemiology (Cambridge, Mass.)}, 29:\penalty0 191--198, March
  2018.
\newblock ISSN 1531-5487.
\newblock \doi{10.1097/EDE.0000000000000787}.

\bibitem[Kenneth J.~Rothman and Lash(2015)]{rothman2015modern}
Sander Kenneth J.~Rothman, Greenland and Timothy~L. Lash.
\newblock \emph{Modern epidemiology}.
\newblock Wolters Kluwer, Philadelphia, 2015.
\newblock ISBN 9780781755641.
\newblock URL \url{http://gbv.eblib.com/patron/FullRecord.aspx?p=3418373}.
\newblock Description based upon print version of record.

\bibitem[MacMahon and Collins(2001)]{macmahon2001reliable}
S.~MacMahon and R.~Collins.
\newblock Reliable assessment of the effects of treatment on mortality and
  major morbidity, {II}: observational studies.
\newblock \emph{Lancet (London, England)}, 357:\penalty0 455--462, February
  2001.
\newblock ISSN 0140-6736.
\newblock \doi{10.1016/S0140-6736(00)04017-4}.

\bibitem[Myers et~al.(2011)Myers, Rassen, Gagne, Huybrechts, Schneeweiss,
  Rothman, Joffe, and Glynn]{myers2011effects}
Jessica~A Myers, Jeremy~A Rassen, Joshua~J Gagne, Krista~F Huybrechts,
  Sebastian Schneeweiss, Kenneth~J Rothman, Marshall~M Joffe, and Robert~J
  Glynn.
\newblock Effects of adjusting for instrumental variables on bias and precision
  of effect estimates.
\newblock \emph{American journal of epidemiology}, 174:\penalty0 1213--1222,
  December 2011.
\newblock ISSN 1476-6256.
\newblock \doi{10.1093/aje/kwr364}.

\bibitem[{Neyman}(1938)]{neyman1938lectures}
J.~{Neyman}.
\newblock \emph{{Lectures and Conferences on Mathematical Statistics}},
  chapter~3, pages 143--147.
\newblock U.S. Department of Agriculture Graduate School, January 1938.

\bibitem[Pang et~al.(2016{\natexlab{a}})Pang, Schuster, Filion, Eberg, and
  Platt]{pang2016targeted}
Menglan Pang, Tibor Schuster, Kristian~B. Filion, Maria Eberg, and Robert~W.
  Platt.
\newblock Targeted maximum likelihood estimation for pharmacoepidemiologic
  research.
\newblock \emph{Epidemiology (Cambridge, Mass.)}, 27:\penalty0 570--577, July
  2016{\natexlab{a}}.
\newblock ISSN 1531-5487.
\newblock \doi{10.1097/EDE.0000000000000487}.

\bibitem[Pang et~al.(2016{\natexlab{b}})Pang, Schuster, Filion, Schnitzer,
  Eberg, and Platt]{pang2016effect}
Menglan Pang, Tibor Schuster, Kristian~B. Filion, Mireille~E. Schnitzer, Maria
  Eberg, and Robert~W. Platt.
\newblock Effect estimation in point-exposure studies with binary outcomes and
  high-dimensional covariate data--a comparison of targeted maximum likelihood
  estimation and inverse probability of treatment weighting.
\newblock \emph{The international journal of biostatistics}, 12\penalty0 (2),
  2016{\natexlab{b}}.

\bibitem[Patorno et~al.(2010)Patorno, Bohn, Wahl, Avorn, Patrick, Liu, and
  Schneeweiss]{patorno2010anticonvulsant}
Elisabetta Patorno, Rhonda~L. Bohn, Peter~M. Wahl, Jerry Avorn, Amanda~R.
  Patrick, Jun Liu, and Sebastian Schneeweiss.
\newblock Anticonvulsant medications and the risk of suicide, attempted
  suicide, or violent death.
\newblock \emph{Journal of American Medical Association}, 303\penalty0
  (14):\penalty0 1401--1409, 2010.

\bibitem[Patrick et~al.(2011)Patrick, Schneeweiss, Brookhart, Glynn, Rothman,
  Avorn, and St\"{u}rmer]{patrick2011implications}
Amanda~R. Patrick, Sebastian Schneeweiss, M.~Alan Brookhart, Robert~J. Glynn,
  Kenneth~J. Rothman, Jerry Avorn, and Til St\"{u}rmer.
\newblock The implications of propensity score variable selection strategies in
  pharmacoepidemiology: an empirical illustration.
\newblock \emph{Pharmacoepidemiology and drug safety}, 20:\penalty0 551--559,
  June 2011.
\newblock ISSN 1099-1557.
\newblock \doi{10.1002/pds.2098}.

\bibitem[Perry et~al.(2018)Perry, Rothwell, and
  Harshbarger]{perry2018devaluation}
Andre Perry, Jonathan Rothwell, and David Harshbarger.
\newblock The devaluation of assets in black neighborhoods.
\newblock \emph{The Brookings Institute}, 2018.

\bibitem[{R Core Team}(2014)]{rct2014r}
{R Core Team}.
\newblock \emph{R: A Language and Environment for Statistical Computing}.
\newblock R Foundation for Statistical Computing, Vienna, Austria, 2014.
\newblock URL \url{http://www.R-project.org/}.

\bibitem[Rassen et~al.(2010)Rassen, Choudhry, and
  Avorn]{rassen2010cardiovascular}
J.~A. Rassen, N.~K. Choudhry, and J.~Avorn.
\newblock Cardiovascular outcomes and mortality in patients using {C}lopidogrel
  with proton pump inhibitors after percutaneous coronary intervention or acute
  coronary syndrome.
\newblock \emph{Journal of Vascular and Interventional Radiology}, 21\penalty0
  (4):\penalty0 601, 2010.

\bibitem[Rassen and Schneeweiss(2012)]{rassen2012using}
Jeremy~A. Rassen and Sebastian Schneeweiss.
\newblock Using high-dimensional propensity scores to automate confounding
  control in a distributed medical product safety surveillance system.
\newblock \emph{Pharmacoepidemiology and drug safety}, 21:\penalty0 41--49,
  2012.

\bibitem[Rassen et~al.(2011)Rassen, Glynn, Brookhart, and
  Schneeweiss]{rassen2011covariate}
Jeremy~A. Rassen, Robert~J. Glynn, M.~Alan Brookhart, and Sebastian
  Schneeweiss.
\newblock Covariate selection in high-dimensional propensity score analyses of
  treatment effects in small samples.
\newblock \emph{American journal of epidemiology}, 173\penalty0 (12):\penalty0
  1404--1413, 2011.

\bibitem[Robins(2001)]{robins2001data}
J.~M. Robins.
\newblock Data, design, and background knowledge in etiologic inference.
\newblock \emph{Epidemiology (Cambridge, Mass.)}, 12:\penalty0 313--320, May
  2001.
\newblock ISSN 1044-3983.

\bibitem[Rotnitzky et~al.(2010)Rotnitzky, Li, and Li]{rotnitzky2010note}
Andrea Rotnitzky, Lingling Li, and Xiaochun Li.
\newblock A note on overadjustment in inverse probability weighted estimation.
\newblock \emph{Biometrika}, 97:\penalty0 997--1001, December 2010.
\newblock ISSN 0006-3444.
\newblock \doi{10.1093/biomet/asq049}.

\bibitem[Schisterman et~al.(2009)Schisterman, Cole, and
  Platt]{schisterman2009overadjustment}
Enrique~F Schisterman, Stephen~R Cole, and Robert~W Platt.
\newblock Overadjustment bias and unnecessary adjustment in epidemiologic
  studies.
\newblock \emph{Epidemiology (Cambridge, Mass.)}, 20:\penalty0 488--495, July
  2009.
\newblock ISSN 1531-5487.
\newblock \doi{10.1097/EDE.0b013e3181a819a1}.

\bibitem[Schneeweiss et~al.(2009)Schneeweiss, Rassen, Glynn, Avorn, Mogun, and
  Brookhart]{schneeweiss2009high}
Sebastian Schneeweiss, Jeremy~A. Rassen, Robert~J. Glynn, Jerry Avorn, Helen
  Mogun, and M.~Alan Brookhart.
\newblock High-dimensional propensity score adjustment in studies of treatment
  effects using health care claims data.
\newblock \emph{Epidemiology (Cambridge, Mass.)}, 20\penalty0 (4):\penalty0
  512, 2009.

\bibitem[Schneeweiss et~al.(2010{\natexlab{a}})Schneeweiss, Patrick, Solomon,
  Dormuth, Miller, Mehta, Lee, and Wang]{schneeweiss2010comparative}
Sebastian Schneeweiss, Amanda~R Patrick, Daniel~H Solomon, Colin~R Dormuth,
  Matt Miller, Jyotsna Mehta, Jennifer~C Lee, and Philip~S Wang.
\newblock Comparative safety of antidepressant agents for children and
  adolescents regarding suicidal acts.
\newblock \emph{Pediatrics}, 125\penalty0 (5):\penalty0 876,
  2010{\natexlab{a}}.

\bibitem[Schneeweiss et~al.(2010{\natexlab{b}})Schneeweiss, Patrick, Solomon,
  Mehta, Dormuth, Miller, Lee, and Wang]{schneeweiss2010variation}
Sebastian Schneeweiss, Amanda~R Patrick, Daniel~H Solomon, Jyotsna Mehta, Colin
  Dormuth, Matthew Miller, Jennifer~C Lee, and Philip~S Wang.
\newblock Variation in the risk of suicide attempts and completed suicides by
  antidepressant agent in adults: a propensity score--adjusted analysis of 9
  years' data.
\newblock \emph{Archives of general psychiatry}, 67\penalty0 (5):\penalty0
  497--506, 2010{\natexlab{b}}.

\bibitem[Schneeweiss et~al.(2017)Schneeweiss, Eddings, Glynn, Patorno, Rassen,
  and Franklin]{schneeweiss2017variable}
Sebastian Schneeweiss, Wesley Eddings, Robert~J. Glynn, Elisabetta Patorno,
  Jeremy Rassen, and Jessica~M. Franklin.
\newblock Variable selection for confounding adjustment in high-dimensional
  covariate spaces when analyzing healthcare databases.
\newblock \emph{Epidemiology (Cambridge, Mass.)}, 28:\penalty0 237--248, March
  2017.
\newblock ISSN 1531-5487.
\newblock \doi{10.1097/EDE.0000000000000581}.

\bibitem[Schuster et~al.(2015)Schuster, Pang, and Platt]{schuster2015role}
Tibor Schuster, Menglan Pang, and Robert~W Platt.
\newblock On the role of marginal confounder prevalence - implications for the
  high-dimensional propensity score algorithm.
\newblock \emph{Pharmacoepidemiology and drug safety}, 24:\penalty0 1004--1007,
  September 2015.
\newblock ISSN 1099-1557.
\newblock \doi{10.1002/pds.3773}.

\bibitem[Shortreed and Ertefaie(2017)]{shortreed2017outcome}
Susan~M. Shortreed and Ashkan Ertefaie.
\newblock Outcome-adaptive lasso: variable selection for causal inference.
\newblock \emph{Biometrics}, 73:\penalty0 1111--1122, December 2017.
\newblock ISSN 1541-0420.
\newblock \doi{10.1111/biom.12679}.

\bibitem[Toh et~al.(2011)Toh, Garc{\'\i}a~Rodr{\'\i}guez, and
  Hern{\'a}n]{toh2011confounding}
Sengwee Toh, Luis~A Garc{\'\i}a~Rodr{\'\i}guez, and Miguel~A Hern{\'a}n.
\newblock Confounding adjustment via a semi-automated high-dimensional
  propensity score algorithm: an application to electronic medical records.
\newblock \emph{Pharmacoepidemiology and drug safety}, 20\penalty0
  (8):\penalty0 849--857, 2011.

\bibitem[van~der Laan and Gruber(2010)]{laan2010collaborative}
Mark~J. van~der Laan and Susan Gruber.
\newblock Collaborative double robust targeted maximum likelihood estimation.
\newblock \emph{The international journal of biostatistics}, 6:\penalty0
  Article 17, May 2010.
\newblock ISSN 1557-4679.
\newblock \doi{10.2202/1557-4679.1181}.

\bibitem[van~der Laan and Rose(2011)]{laan2011targeted}
Mark~J. van~der Laan and Sherri Rose.
\newblock \emph{Targeted learning: causal inference for observational and
  experimental data}.
\newblock Springer Science \& Business Media, 2011.

\bibitem[van~der Vaart(1998)]{vaart1998asymptotic}
A.~W. van~der Vaart.
\newblock \emph{Asymptotic Statistics}.
\newblock Cambridge Series in Statistical and Probabilistic Mathematics.
  Cambridge University Press, 1998.
\newblock \doi{10.1017/CBO9780511802256}.

\bibitem[van~der Vaart and Wellner(1996)]{vaart1996weak}
Aad~W. van~der Vaart and Jon~A. Wellner.
\newblock \emph{Weak Convergence and Empirical Processes}.
\newblock Springer, 1996.

\bibitem[Venables and Ripley(2002)]{venables2002modern}
W.~N. Venables and B.~D. Ripley.
\newblock \emph{Modern Applied Statistics with S}.
\newblock Springer, New York, fourth edition, 2002.
\newblock URL \url{http://www.stats.ox.ac.uk/pub/MASS4}.
\newblock ISBN 0-387-95457-0.

\bibitem[Williamson et~al.(2017)Williamson, Gilbert, Simon, and
  Carone]{williamson2017nonparametric}
Brian~D. Williamson, Peter~B. Gilbert, Noah Simon, and Marco Carone.
\newblock Nonparametric variable importance assessment using machine learning
  techniques.
\newblock Technical Report Working Paper 422, University of Washington, August
  2017.

\bibitem[Zou(2006)]{zou2006adaptive}
Hui Zou.
\newblock The adaptive lasso and its oracle properties.
\newblock \emph{Journal of the American Statistical Association}, 101\penalty0
  (476):\penalty0 1418--1429, 2006.

\end{thebibliography}

\appendix 

\section{Web Appendix A}
In this appendix, we begin with a proof of Lemma~1 followed by a proof of the double robustness property of our estimator  $\wh{\theta}_j^{dr}$. For ease of reading we re-state Lemma~1 here:
\begin{customthm}{1}
	%\label{lemma:propScore}
	For a random vector $(O_i, E_i, C_{1i},\ldots, C_{pi})\sim P$ the parameters 
	\[ \psi_j(P) = \frac{\expec\{\tau_j(C_{ji})|E_i=1\} }{\expec\{\tau_j(C_{ji})|E_i=0\}},
	\]
	\[  
	\phi_j(P) =  {\expec\{\tau_j(C_{ji})|E_{i}=1\} }- {\expec\{\tau_j(C_{ji})|E_i=0\}},
	\] can equivalently be written as 
	\[\psi_j(P) = \frac{\expec\{O_i\pi_j(C_{ji})\}/\pr(E_i=1) }{\expec[O_i\{1 - \pi_j(C_{ji})\}]/\pr(E_i=0) },\] and 
	\[\phi_j(P) =  {\frac{\expec\{O_i\pi_j(C_{ji})\}}{\pr(E_i=1)} } - {\frac{\expec[O_i\{1 - \pi_j(C_{ji})\}]}{\pr(E_i=0)} }.\]	
\end{customthm}
\begin{proof}
	The proof of this follows immediately by definition of conditional expectations. We have 
	\begin{align*}
		\expec \{ \tau_j(C_{ji}) |E_i=1 \} &= \expec \left\{ \frac{\tau_j(C_{ji})I(E_i =1)}{\pr(E_i=1)} \right\} \\
		&= \expec   \left[ \expec \left\{ \frac{\tau_j(C_{ji})I(E_i =1)}{\pr(E_i=1)} \right\} \Big| C_{ji} \right] \\
		&= \expec    \left[ \frac{\tau_j(C_{ji})\expec\{ I(E_i =1) |C_{ji} \}}{\pr(E_i=1)} \right] \\
		&= \expec    \left[ \frac{\tau_j(C_{ji})\pr\{ E_i =1 |C_{ji} \}}{\pr(E_i=1)} \right] \\
		&= \expec    \left[ \frac{\expec(O_{i}|C_{ji}) \pi_j(C_{ji}) }{\pr(E_i=1)} \right] \\
		&= \expec    \left[ \expec \left\{ \frac{O_i \pi_j(C_{ji}) }{\pr(E_i=1)}\right\} \Big| C_{ji}\right] \\
		&=   \frac{ \expec \left\{ O_i \pi_j(C_{ji}) \right\} }{\pr(E_i=1)}.
	\end{align*}
	By an identical argument we can show that 
	\[\expec \{ \tau_j(C_{ji}) |E_i=0 \}  =  \frac{ \expec \left[ O_i \{1- \pi_j(C_{ji})\} \right] }{\pr(E_i=0)}, \]
	which completes the proof.
\end{proof}

We now prove a main result regarding our one-step correction estimator or the doubly robust estimator $\wh{\theta}_j^{dr}$. Recall that our one step correction is given by: 
\begin{align}
	\wh{\theta}^{dr}_{j,n} %&= \theta_{j}(\wh{\pi}_{j,n}^{dr},\wh{\tau}_{j,n}^{dr}) + n^{-1}\sum_{i=1}^{n} D_{\theta_{j}(\wh{\pi}^{dr}_{j,n}, \wh{\tau}_{j,n}^{dr})} \nonumber\\
	&=  \wh{\theta}^{naive}_{j,n} +  \frac{1}{n}\sum_{i=1}^{n}  O_i \wh{\pi}_{j,n}(C_{ji}) - \wh{\tau}_{j,n}(C_{ji}) \wh{\pi}_{j,n}(C_{ji})\nonumber \\
	&=  \frac{1}{n} \sum_{i=1}^{n} I(E_i=1) \wh{\tau}_{j,n}(C_{ji})  +  O_i \wh{\pi}_{j,n}(C_{ji}) - \wh{\tau}_{j,n}(C_{ji}) \wh{\pi}_{j,n}(C_{ji}). \label{eqn:DRestimate2}
\end{align}

\begin{lemma}
	\label{lemma:DRestimate}
	For a sequence of point-wise convergent estimates, i.e. $\wh{\tau}^{dr}_{j,n} \to \tau^*_j$, and $\wh{\pi}^{dr}_{j,n} \to \pi^*_j$, the estimator (\ref{eqn:DRestimate2}) is doubly robust in the following sense: if either $\tau^*_j(\cdot) = \tau_j(\cdot; P^0)$ or $\pi^*_j(\cdot) =\pi_j(\cdot; P^0)$ then $\wh{\theta}_{j,n}^{dr} \to \theta_j(P^0)$.
\end{lemma}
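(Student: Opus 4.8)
The plan is to evaluate the probability limit of each term in \eqref{eqn:DRestimate2} using the weak law of large numbers, and then show that the resulting limiting expression collapses to $\theta_j(P^0)$ whenever at least one of the nuisance estimators is consistent. First I would invoke the pointwise convergence $\wh{\tau}_{j,n}\to\tau_j^*$ and $\wh{\pi}_{j,n}\to\pi_j^*$ (together with a dominated-convergence / uniform-integrability argument, or the boundedness that will follow from the parameter-space constraints) to conclude
\[
\wh{\theta}^{dr}_{j,n} \;\xrightarrow{p}\; \expec_{P^0}\Big[ I(E_i=1)\,\tau_j^*(C_{ji}) + O_i\,\pi_j^*(C_{ji}) - \tau_j^*(C_{ji})\,\pi_j^*(C_{ji}) \Big] \;=:\; \theta_j^*.
\]
So the whole statement reduces to the algebraic claim that $\theta_j^* = \theta_j(P^0)$ whenever $\tau_j^* = \tau_j(\cdot;P^0)$ or $\pi_j^* = \pi_j(\cdot;P^0)$.

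Next I would handle the two cases separately, in each case conditioning on $C_{ji}$ and using the tower property. If $\pi_j^* = \pi_j(\cdot;P^0)$, then $\expec_{P^0}\{I(E_i=1)\mid C_{ji}\} = \pi_j(C_{ji};P^0)$ and $\expec_{P^0}\{O_i\mid C_{ji}\} = \tau_j(C_{ji};P^0)$, so the first and third terms combine: $\expec\{I(E_i=1)\tau_j^*(C_{ji})\} = \expec\{\pi_j(C_{ji};P^0)\tau_j^*(C_{ji})\} = \expec\{\tau_j^*(C_{ji})\pi_j^*(C_{ji})\}$, which exactly cancels the third term, leaving $\expec\{O_i\pi_j(C_{ji};P^0)\} = \expec\{\tau_j(C_{ji};P^0)\pi_j(C_{ji};P^0)\} = \expec\{I(E_i=1)\tau_j(C_{ji};P^0)\} = \theta_j(P^0)$. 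If instead $\tau_j^* = \tau_j(\cdot;P^0)$, then the second and third terms combine: $\expec\{O_i\pi_j^*(C_{ji}) - \tau_j(C_{ji};P^0)\pi_j^*(C_{ji})\} = \expec\big[\pi_j^*(C_{ji})\{\expec(O_i\mid C_{ji}) - \tau_j(C_{ji};P^0)\}\big] = 0$, leaving just $\expec\{I(E_i=1)\tau_j(C_{ji};P^0)\} = \theta_j(P^0)$ by definition.

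The main obstacle is not the algebra — which is the standard double-robustness cancellation — but making the passage to the limit rigorous: pointwise convergence of $\wh{\tau}_{j,n}$ and $\wh{\pi}_{j,n}$ does not by itself give convergence of the sample averages of products like $I(E_i=1)\wh{\tau}_{j,n}(C_{ji})$ or $O_i\wh{\pi}_{j,n}(C_{ji})$, since the estimators are themselves random and depend on all $n$ observations. I would address this by combining a uniform-in-$c$ control of the estimation error (or a Donsker/Glivenko–Cantelli-type condition, analogous to Condition~\ref{assum:Donsker}) with an integrability bound on $O_i$ and the (bounded) functions $\pi_j^*,\tau_j^*$, so that $n^{-1}\sum_i O_i\wh{\pi}_{j,n}(C_{ji}) = n^{-1}\sum_i O_i\pi_j^*(C_{ji}) + o_p(1)$ and similarly for the other terms; the first average then converges by the ordinary WLLN. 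With that technical step in place, the limit is $\theta_j^*$ and the case analysis above finishes the proof.
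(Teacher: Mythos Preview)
Your proposal is correct and follows essentially the same route as the paper: pass to the limit via the WLLN to obtain $\expec_{P^0}[O_i\pi_j^*(C_{ji}) + \tau_j^*(C_{ji})\{I(E_i=1)-\pi_j^*(C_{ji})\}]$, then handle the two cases by conditioning on $C_{ji}$ so that the ``wrong'' nuisance drops out. The only difference is that the paper simply invokes ``the WLLN and our convergence assumption'' for the first step, whereas you (rightly) flag that pointwise convergence of data-dependent estimators does not automatically justify this and sketch the extra uniform/Glivenko--Cantelli control needed; the paper does not supply that argument.
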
 
\begin{proof}
	By the WLLN and our convergence assumption we have 
	\begin{align*}
		\wh{\theta}_j^{dr} \to \expec_{P^0}\left[ O_i {\pi}^*_j(C_{ji}) + {\tau}^*_j(C_{ji}) \left\{ I(E_i=1) - {\pi}^*_j(C_{ji}) \right\}  \right].
	\end{align*}
	First assume that $\tau^*_j(\cdot) = \tau_j(\cdot; P^0)$. Then
	\begin{align*}
		\wh{\theta}_j^{dr} &\to \expec_{P^0}\left[ {\pi}^*_j(C_{ji})\{O_i - \tau_j(C_{ji};P^0)\} + I(E_i=1){\tau}_j(C_{ji};P^0)  \right]\\
		&= \expec_{P^0}\left[ {\pi}^*_j(C_{ji})\{O_i - \tau_j(C_{ji};P^0)\}\right] + \theta_j(P^0)\\
		&= \expec_{P^0}\left[ {\pi}^*_j(C_{ji})\{\expec_{P^0}(O_i|C_{ji}) - \tau_j(C_{ji};P^0)\}\right] + \theta_j(P^0)\\
		&= \theta_j(P^0),
	\end{align*} 
	since by definition $\expec_{P^0}(O_i|C_{ji}) = \tau_j(C_{ji};P^0)$. Assume instead that $\pi^*_j(\cdot) = \pi_j(\cdot; P^0)$, then
	\begin{align*}
		\wh{\theta}_j^{dr} &\to \expec_{P^0}\left[ O_i{\pi}_j(C_{ji};P^0) + {\tau}^*_j(C_{ji})\{I(E_i=1) - \pi_j(C_{ji}; P^0)\}  \right]\\
		&= \expec_{P^0}\left\{ O_i{\pi}_j(C_{ji};P^0)\right\} + \expec_{P^0}\left[{\tau}^*_j(C_{ji})\{ \expec_{P^0}\{I(E_i=1)|C_{ji}\} - \pi_j(C_{ji}; P^0)\}  \right]\\
		&= \theta_j(P^0),
	\end{align*}
	where $\expec_{P^0}\left\{ O_i{\pi}_j(C_{ji};P^0)\right\} = \theta_j(P^0)$ by the proof of Lemma~\ref{lemma:propScore} and $\expec_{P^0}\{I(E_i=1)|C_{ji}\} =\pr(E_i=1|C_{ji}) = \pi_j(C_{ji}; P^0)$ by definition.
\end{proof}

\section{Web Appendix B}
In this appendix, we present details for the targeted maximum likelihood estimator~(TMLE), $\wh{\theta}_j^{tl}$ and also theoretical guarantees for our efficient estimators. In the first section, we will prove Theorem~1 after introducing appropriate background and accompanying results. In the second section of this appendix, we will derive the full TMLE algorithm and provide some extensions/variations. 

\subsection{Proof of Theorem 1}

We begin with some notations and definitions. For i.i.d. data $\bs{X}_i = (O_i, E_i, C_{1i}, \ldots, C_{pi})\sim P \ (i=1,\ldots,n)$  for some probability measure $P$, recall the definitions from the main manuscript: 
\begin{equation}
	\label{eqn:simpleParas}
	\theta_j(P) = \expec_P\{I(E_i=1) \tau_j(C_{ji}; P ) \}, \quad \mu_O(P) = \expec_P(O_i), \quad \mu_E(P) = \expec_P(E_i),
\end{equation}
where $\tau_j(c; P) = \expec_P(O_i|C_{ji} = c)$. Also recall that our variable importance parameters can be written in terms of the parameters~\eqref{eqn:simpleParas}: 
\begin{align*}
	%\label{eqn:poiBySimple}
	\psi_j(P) &= \frac{\theta_j(P)/\mu_E(P)}{\{ \mu_O(P) - \theta_j(P) \}/\{ 1 - \mu_E(P) \}},\\
	\phi_j(P) &= {\theta_j(P)/\mu_E(P)} - {\{ \mu_O(P) - \theta_j(P) \}/\{ 1 - \mu_E(P) \}}.
\end{align*}
As before, we denote the propensity score by $\pi_j(c; P) = \pr(E=1|C_j=c)$. Recall the \emph{adjusted exposure-response model} defined as $Q_{j}(e, c_j; P) = \expec_P(O|E=e, C_j = c_j)$. Also recall the relationship: 
\begin{equation}
	\label{eqn:3func}
	\tau_j(c_j;P) = \pi_j(c_j;P)Q_j(1, c_j;P) + \{1-\pi_j(c_j;P)\}Q_j(0,c_j;P).
\end{equation}

For any $P$-measurable function $f$, we define $Pf = \int f(\bs{x})\, dP(\bs{x})$ and $P_nf = n^{-1}\sum_{i=1}^{n}f(\bs{X}_i)$.  We define a \emph{statistical model} as a collection of probability measures denoted by $\mathcal{M}$, for our manuscript (and this appendix), we only consider nonparametric models.

We now prove that our estimators are asymptotically linear, this is done using a functional version of the Taylor's theorem where recall the first derivative is the \emph{influence function}. We first formally define an influence function.
\begin{definition}
	\label{def:IC}
	Consider a functional $\psi: \mathcal{M} \to \mathbb{R}$. Let $p$ denote the Radon-Nikodym derivative with respect to some dominating measure $\nu$, corresponding to the measure $P\in \mathcal{M}$. Let $\{P_{\e}: \e \in \mathbb{R} \}$ be a one dimensional paramteric sub-model such that at $\e = 0$,  $P_{0} = P$ and the score function is given by $h(\cdot)$. Then the \emph{influence curve}, $D_{\psi(P)}$, is a function satisfying $PD_{\psi(P)} = 0$ for any $P\in \mathcal{M}$ and the following relationship:
	\begin{equation}
		\label{eqn:EICdef}
		\left. \frac{\partial }{\partial \e} \psi(P_{\e}) \right|_{\e = 0} = P\left( D_{\psi(P)}h \right) = \int D_{\psi(P)}(\bs{x})h(\bs{x}) dP(\bs{x}).
	\end{equation}
	In the case of nonparametric models $\mathcal{M}$, there is only one derivative and that is called the efficient influence curve~(EIC).
\end{definition}

\begin{lemma}
	\label{lemma:EIC}
	%The parameters $\theta_j(P), \mu_O(P)$ and $\mu_E(P)$ are pathwise differentiable at each $P\in \mathcal{M}$ relative to $\mathcal{M}$. 
	For a random vector $\bs{X} = (O, E, C_1,\ldots, C_p)\sim P$, the efficient influence functions are defined as
	\begin{align*}
		%\label{eqn:EICofPOI}
		D_{\theta_j(P)}(\bs{x}) &= o\pi_j(c_j; P) + \tau_j(c_j;P)\{I(e=1) - \pi_j(c_j;P)\} - \theta_j(P),\\
		D_{\mu_O(P)}(\bs{x}) &= o - \mu_O(P),\\ 
		D_{\mu_E(P)}(\bs{x}) &= I(e=1) - \mu_E(P).
	\end{align*}
\end{lemma}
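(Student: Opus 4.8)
The plan is to compute each efficient influence function directly from Definition~\ref{def:IC}, i.e., by differentiating the functional along a one-dimensional parametric submodel and reading off the integrand against the score. The two mean functionals $\mu_O$ and $\mu_E$ are immediate: for $\mu_O(P)=\int o\,dP(\bs x)$ and a submodel $\{P_\e\}$ with score $h$, we have $\frac{\partial}{\partial\e}\mu_O(P_\e)\big|_{\e=0}=\int o\,h(\bs x)\,dP(\bs x)=\int (o-\mu_O(P))h(\bs x)\,dP(\bs x)$, where the recentering is free because $\int h\,dP=0$; this gives $D_{\mu_O(P)}(\bs x)=o-\mu_O(P)$, and it plainly integrates to zero under $P$. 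The identical argument with $o$ replaced by $I(e=1)$ yields $D_{\mu_E(P)}$. So the work is entirely in $\theta_j$.

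For $\theta_j(P)=\expec_P\{I(E=1)\tau_j(C_j;P)\}$, I would use the joint-density factorization $p(o,e,c_j)=p(o\mid e,c_j)\,p(e\mid c_j)\,p(c_j)$ flagged in the main text, write the score $h$ as the orthogonal sum $h = h_1(o\mid e,c_j) + h_2(e\mid c_j) + h_3(c_j)$ (each piece mean-zero given the conditioning variables), and differentiate $\theta_j(P_\e)$ in $\e$. The dependence of $\theta_j$ on $P$ enters in two places: through the outer expectation over $(e,c_j)$ and through $\tau_j(\cdot;P)$ itself. Using $\tau_j(c_j;P)=\expec_P(O\mid C_j=c_j)$, the derivative of $\tau_j$ along the submodel contributes a term involving $h_1$ and $h_2$ (since $\tau_j$ averages $Q_j$ over the law of $E\mid C_j$), while the derivative of the outer averaging measure contributes terms involving $h_2$ and $h_3$. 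Collecting the three pieces and matching each against $h_1,h_2,h_3$ via $\expec_P(D_{\theta_j}h)$ identifies the components of $D_{\theta_j(P)}$ on each tangent subspace. The candidate $D_{\theta_j(P)}(\bs x)=o\pi_j(c_j;P)+\tau_j(c_j;P)\{I(e=1)-\pi_j(c_j;P)\}-\theta_j(P)$ should then be verified to (i) have mean zero under $P$ — which follows since $\expec_P\{O\pi_j(C_j)\}=\theta_j(P)$ by the computation in the proof of Lemma~\ref{lemma:propScore} and $\expec_P\{I(E=1)\mid C_j\}=\pi_j(C_j)$ — and (ii) reproduce the Gateaux derivative against every score $h$, using orthogonality of the three tangent subspaces to check it component by component.

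The main obstacle is bookkeeping the derivative of the \emph{implicitly defined} nuisance $\tau_j(\cdot;P_\e)$ correctly: one must track how perturbing $p(o\mid e,c_j)$ and $p(e\mid c_j)$ separately moves $\tau_j$, and then confirm that the $Q_j$-versus-$\pi_j$ split in identity~\eqref{eqn:3func} assembles exactly into the stated closed form rather than some other representation. A clean way to sidestep part of this is to note that $\theta_j$ is (up to known constants) the same target as a standard "mean outcome under treatment restricted to the $C_j$-marginal" parameter, whose EIC is classical (e.g.\ \citealp{bickel1993efficient}); I would cite that and then just verify the displayed formula satisfies \eqref{eqn:EICdef}, which reduces the proof to the mean-zero check plus a one-line Gateaux-derivative confirmation. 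Either route ends with the same algebra, so the essential content is the $\theta_j$ calculation; $\mu_O,\mu_E$ are trivial.
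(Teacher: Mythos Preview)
Your proposal is correct, and your treatment of $\mu_O$ and $\mu_E$ matches the paper's exactly. For $\theta_j$, however, you take a genuinely different route than the paper. You propose to work through the three-way factorization $p(o\mid e,c_j)\,p(e\mid c_j)\,p(c_j)$ and the decomposition $\tau_j=\pi_jQ_j(1,\cdot)+(1-\pi_j)Q_j(0,\cdot)$, tracking how each orthogonal score component $h_1,h_2,h_3$ perturbs $Q_j$, $\pi_j$, and the outer averaging measure separately. The paper instead never introduces $Q_j$ in this derivation: it differentiates $\tau_j(c;P_\e)=\int o\,p_\e(o\mid c)\,d\nu(o)$ directly via the \emph{marginal} conditional density $p(o\mid c)$, obtaining $\int o\,h(o\mid c)\,dP(o\mid c)$, then uses only the two-piece identity $h(o\mid c)=h(o,c)-h(c)$ to rewrite everything as a single integral against the joint score. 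This is shorter and avoids the bookkeeping you flag as the ``main obstacle.'' On the other hand, your route has the advantage that it would directly produce the orthogonal decomposition $D_{\theta_j}=D_O+D_E+D_C$ that the paper needs later (and derives separately) for the TMLE construction, so the extra work is not wasted. Your fallback suggestion of citing a classical EIC and then simply verifying \eqref{eqn:EICdef} plus the mean-zero property is also sound and arguably the most economical of the three.
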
  
\begin{proof}
	We will show that the above functions satisfy Definition~\ref{def:IC}. We begin with 
	\begin{align*}
		\theta_{j}(P_{\e}) &= \int I(e=1) \tau_j(c; P_{\e}) dP_{\e}(e,c).
	\end{align*}
	For convenience we drop the subscript $j$ from $c_j$. Taking the derivative at $\e=0$ gives us the following by the chain product rule: 
	\begin{align}
		\left. \frac{\partial \theta_{j}(P_{\e}) }{\partial \e} \right|_{\e=0} &= \int I(e=1)\tau_j(c; P)h(c,e)dP(c,e)\label{eqn:line1} \\
		&+ \int I(e=1)  \left. \frac{\partial \tau_{j}(c;P_{\e}) }{\partial \e} \right|_{\e=0} dP(c,e)\label{eqn:line2},
	\end{align}
	where by definition $h = dP'/dP$ (i.e. the score function is the first derivative of the log-likelihood). 
	
	For the second term we note that 
	\begin{align*}
		\left. \frac{\partial \tau_{j}(c;P_{\e}) }{\partial \e} \right|_{\e=0} &= \left. \frac{\partial }{\partial \e}   \expec_{P_{\e}} (O|C_j)\right|_{\e=0} \\
		&= \left. \int o \frac{\partial p_{\e}(o|c)}{\partial \e} \right|_{\e=0} d\nu(o),\\
		&= \int o h(o|c) p(o|c)d\nu(o)
	\end{align*}
	where (with some abuse of notation) $p(o|c)$ is the conditional density (Radon-Nikodym derivative) with respect to some domintaing measure $\nu$, and $h(o|c)$ is the corresponding score function.
	
	Now note that \eqref{eqn:line1} is already in the form $\int D_{\psi(P)} h dP$, for \eqref{eqn:line2} we have the following:
	\begin{align*}
		\eqref{eqn:line2} &= \int I(e=1) \left\{ \int o h(o|c) p(o|c)d\nu(o) \right\} dP(e,c)\\
		&= \int I(e=1) o h(o|c) \frac{p(o,c)}{p(c)}p(e,c) d\nu(e,c,o)\\
		&= \int I(e=1) p(e|c) o h(o|c) p(o,c) d\nu(e,c,o)\\
		&= \int \pi_j(c; P)\times  o\times  h(o|c) p(o,c) d\nu(c,o).
	\end{align*}
	Now by the definition of score function, and properties of conditional probabilities, $h(o,c) = h(o|c) + h(c) \Rightarrow h(o|c) = h(o,c) - h(c)$. Thus, continuing our derivation:
	\begin{align*}
		\eqref{eqn:line2} &= \int \pi_j(c; P)\times  o\times  \{ h(o,c) - h(c) \} p(o,c) d\nu(c,o)\\
		&= \int \pi_j(c;P)\times o \times h(o,c)dP(o,c) - \int \pi_j(c;P) \times o \times h(c) p(o|c)p(c) d\nu (c,o)\\
		&= \int o\pi_j(c;P) h(o,c)dP(o,c) - \int \pi_j(c;P) \left\{  \int o p(o|c)d\nu(o)\right\} h(c) p(c) d\nu (c)\\
		&= \int o\pi_j(c;P) h(o,c)dP(o,c) - \int \pi_j(c;P)\tau_j(c;P) h(c) dP(c). 
	\end{align*}
	
	By the properties of the score function we have that for any function $f$, 
	\begin{align*}
		\int f(c) h(o,c) p(o,c) d\nu(o,c) &= \int f(c) h(c) dP(c). 
	\end{align*}
	Thus we can collect the three different integrals into one:
	\begin{align*}
		%\label{eqn:thetajDeriv}
		\left. \frac{\partial \tau_{j}(c;P_{\e}) }{\partial \e} \right|_{\e=0} &= \int \left\{ I(e=1)\tau_j(c;P) + o\pi_j(c;P) - \pi_j(c;P)\tau_j(c;P)\right\} h(o,c,e) dP(o,c,e).
	\end{align*}
	Thus we have $D^{\diamond}_{\theta_j(P)} = o \pi_j + \tau_j\{I(e=1) - \pi_j\}$ satisfying \eqref{eqn:EICdef}, for this to be the EIC we just need to recenter, i.e. $D_{\theta_{j}(P)} = D^{\diamond}_{\theta_j(P)} - PD^{\diamond}_{\theta_j(P)}$. The proof is completed by noting that $PD^{\diamond}_{\theta_j(P)} = \theta_j$. EIC for the other parameters is an easy calculation: 
	\begin{align*}
		\left. \frac{\partial \mu_O(P_{\e}) }{\partial \e} \right|_{\e=0} &= \left. \frac{\partial  }{\partial \e} \int o dP_{\e}(o)  \right|_{\e=0}\\
		&= \int o  h(o) dP(o),
	\end{align*}
	and again by centering we achieve $D_{\mu_{O}(P)}(\bs{x}) = o - \mu_{O}(P)$. The result is identical for $\mu_E(P)$. 
	
\end{proof}

\begin{lemma}
	\label{lemma:Expansion}
	For a nonparametric statistical model $\mathcal{M}$ and i.i.d. data $\bs{X}_i = (O_i, E_i, C_{1i}, \ldots, C_{pi})\sim P^0\in \mathcal{M}$, we have the following expansions for all $P\in \mathcal{M}$:
	\begin{align*}
		\theta_j(P) - \theta_j(P^0) &= P^0_nD_{\theta_j(P^0)} - P^0_nD_{\theta_j(P)} + (P^0_n - P^0)\left\{D_{\theta_j(P)} - D_{\theta_j(P^0)}  \right\} + R_{\theta_j}(P, P^0),\\
		\mu_O(P) - \mu_O(P^0) &= P^0_nD_{\mu_O(P^0)} - P^0_nD_{\mu_O(P)},\\
		\mu_E(P) - \mu_E(P^0) &= P^0_nD_{\mu_E(P^0)} - P^0_nD_{\mu_E(P)},
	\end{align*}
	where the remainder term is defined as
	\begin{align*}
		R_{\theta_j}(P, P^0) &= P^0 \left[ \{\tau_j(c_j; P^0) - \tau_j(c_j; P) \}\{ \pi_j(c_j;P) - \pi_j(c_j;P^0) \} \right].
	\end{align*}
\end{lemma}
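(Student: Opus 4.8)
The plan is to treat each of the three parameters separately, the linear ones ($\mu_O$, $\mu_E$) being trivial and $\theta_j$ carrying all the content. For $\mu_O$, note $\mu_O(P) = Pf$ where $f(\bs{x}) = o$, and $D_{\mu_O(P)}(\bs{x}) = o - \mu_O(P)$. Then $P^0_n D_{\mu_O(P^0)} - P^0_n D_{\mu_O(P)} = P^0_n\{(o - \mu_O(P^0)) - (o - \mu_O(P))\} = \mu_O(P) - \mu_O(P^0)$, which is exactly the claimed identity; the same one-line computation handles $\mu_E$ with $f(\bs{x}) = I(e=1)$. So the only real work is the expansion for $\theta_j$.

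For $\theta_j$, I would start from the exact first-order (von Mises / functional Taylor) identity
\begin{equation*}
	\theta_j(P) - \theta_j(P^0) = -P^0 D_{\theta_j(P)} + R_{\theta_j}(P, P^0),
\end{equation*}
where $R_{\theta_j}$ is \emph{defined} to be the remainder, and then verify by direct computation that this $R_{\theta_j}$ equals the stated double-robust product $P^0[\{\tau_j(c_j;P^0) - \tau_j(c_j;P)\}\{\pi_j(c_j;P) - \pi_j(c_j;P^0)\}]$. Concretely, using $D_{\theta_j(P)}(\bs{x}) = o\pi_j(c_j;P) + \tau_j(c_j;P)\{I(e=1) - \pi_j(c_j;P)\} - \theta_j(P)$ from Lemma~\ref{lemma:EIC}, I compute $P^0 D_{\theta_j(P)}$ by iterated expectations: replace $o$ by $\expec_{P^0}(O|C_j) = \tau_j(c_j;P^0)$ and $I(e=1)$ by $\pi_j(c_j;P^0)$ inside the integral, obtaining $P^0 D_{\theta_j(P)} = P^0[\tau_j(\cdot;P^0)\pi_j(\cdot;P) + \tau_j(\cdot;P)\{\pi_j(\cdot;P^0) - \pi_j(\cdot;P)\}] - \theta_j(P)$. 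Since $\theta_j(P^0) = P^0[I(e=1)\tau_j(\cdot;P^0)] = P^0[\pi_j(\cdot;P^0)\tau_j(\cdot;P^0)]$, rearranging $\theta_j(P) - \theta_j(P^0) + P^0 D_{\theta_j(P)}$ collapses to $P^0[\{\tau_j(\cdot;P^0) - \tau_j(\cdot;P)\}\{\pi_j(\cdot;P) - \pi_j(\cdot;P^0)\}]$, matching the stated $R_{\theta_j}$.

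Finally I would pass from this exact identity to the displayed empirical-process form by the standard algebraic splitting trick: write $-P^0 D_{\theta_j(P)} = (P^0_n - P^0)D_{\theta_j(P^0)} - (P^0_n - P^0)D_{\theta_j(P)} - P^0_n D_{\theta_j(P)} + P^0_n D_{\theta_j(P^0)}$, using $P^0 D_{\theta_j(P^0)} = 0$ (the centering property of an influence curve, part of Definition~\ref{def:IC}). Collecting terms gives $\theta_j(P) - \theta_j(P^0) = P^0_n D_{\theta_j(P^0)} - P^0_n D_{\theta_j(P)} + (P^0_n - P^0)\{D_{\theta_j(P)} - D_{\theta_j(P^0)}\} + R_{\theta_j}(P,P^0)$, which is the claim. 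The only step requiring care — the ``main obstacle,'' though it is more bookkeeping than difficulty — is the iterated-expectation computation of $P^0 D_{\theta_j(P)}$ showing the cross terms cancel into the clean product remainder; everything else is rearrangement and the definitional properties of influence curves.
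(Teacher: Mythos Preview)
Your proposal is correct and follows essentially the same route as the paper: first verify the von Mises identity $\theta_j(P)-\theta_j(P^0)=-P^0D_{\theta_j(P)}+R_{\theta_j}(P,P^0)$ by computing $P^0D_{\theta_j(P)}$ via iterated expectations (replacing $o$ by $\tau_j(\cdot;P^0)$ and $I(e=1)$ by $\pi_j(\cdot;P^0)$), then add and subtract $P^0_n$ terms and use $P^0D_{\theta_j(P^0)}=0$ to reach the displayed empirical-process decomposition; the $\mu_O,\mu_E$ cases are handled identically. One small slip: your intermediate identity for $-P^0D_{\theta_j(P)}$ as written does not balance (expanding it gives $2P^0_nD_{\theta_j(P^0)}-2P^0_nD_{\theta_j(P)}+P^0D_{\theta_j(P)}$ rather than $-P^0D_{\theta_j(P)}$), but the final collected form you state is correct, so this is a bookkeeping typo rather than a gap.
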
 

\begin{proof}
	Note that the expansion for $\mu_{O}$ and $\mu_E$ follows immediately by definition. For $\theta_j$ we begin with the following representation:
	\begin{align}
		\label{eqn:thetaExpBasic}
		\theta_j(P) - \theta_j(P^0) &= - P_n^0D_{\theta_{j}(P)} + R_{\theta_j}(P,P^0),
	\end{align}
	where 
	
	\begin{align*}
		R_{\theta_j}(P,P^0) &= \theta_j(P) - \theta_j(P^0) + P^0 D_{\theta_{j}}(P)\\
		&= \theta_{j}(P) - P^0\left\{ I(E_i=1)\tau_j(C_j; P^0) \right\} \\
		&+ P^0\left[ O_i\pi_j(C_j;P) + \tau_j(C_j; P)\left\{ I(E_i = 1) - \pi_j(C_j; P) \right\} \right] - \theta_{j}(P)\\
		&= \expec_{P^0} \left[ \pi_j(C_j;P)\{O_i - \tau_j(C_j;P)\} - I(E_i = 1)\{\tau_j(C_j;P^0) - \tau_j(C_j; P) \} \right]\\
		&= \expec_{P^0} \left[ \pi_j(C_j;P)\{\expec_{P^0} (O_i|C_j) - \tau_j(C_j;P)\} - \expec_{P^0}\{I(E_i = 1)|C_j\}\{\tau_j(C_j;P^0) - \tau_j(C_j; P) \} \right]\\
		&= \expec_{P^0} \left[ \pi_j(C_j;P)\{\tau_j(C_j; P^0) - \tau_j(C_j;P)\} - \pi_j(C_j;P^0)\{\tau_j(C_j;P^0) - \tau_j(C_j; P) \} \right]\\
		&= \expec_{P^0}\left[ \left\{ \tau_j(C_j; P^0) - \tau_{j}(C_j; P) \right\} \left\{ \pi_{j}(C_j; P) - \pi_{j}(C_j; P^0) \right\}\right]. 
	\end{align*}
	With that, the full expansion of $\theta_j(P) - \theta_{j}(P^0)$ is straightforward manipulation: 
	\begin{align*}
		\theta_j(P) - \theta_j(P^0) &= -P^0D_{\theta_{j}(P)} + R_{\theta_j}(P,P^0)\\
		&= (P^0_n-P^0)D_{\theta_{j}(P)} - P^0_nD_{\theta_{j}(P)} + R_{\theta_j}(P,P^0)\\
		&= (P^0_n-P^0)\left\{D_{\theta_{j}(P)} - D_{\theta_{j}(P^0)}  \right\} +(P^0_n - P^0)D_{\theta_{j}(P^0)}  - P^0_nD_{\theta_{j}(P)} + R_{\theta_j}(P,P^0)\\
		&= P^0_nD_{\theta_{j}(P^0)}  - P^0_nD_{\theta_{j}(P)} + (P^0_n-P^0)\left\{D_{\theta_{j}(P)} - D_{\theta_{j}(P^0)}  \right\} + R_{\theta_j}(P,P^0),
	\end{align*} 
	where the last equality follows from $PD_{\psi(P)} = 0$ for any probability measure $P$. 
\end{proof}

The expansion of the previous part can be broken into four parts:
\begin{align*}
	\underbrace{P^0_nD_{\theta_j(P^0)} }_{I} - \underbrace{P^0_nD_{\theta_j(P)}}_{II} + \underbrace{(P^0_n - P^0)\left\{D_{\theta_j(P)} - D_{\theta_j(P^0)}  \right\} }_{III} +  \underbrace{R_{\theta_j}(P, P^0)}_{IV}.
\end{align*}

Now (I) is a sample mean of some function of our data, thus we can apply the CLT on this part. The term (II) can be dealt with in two ways corresponding to our two estimators. The terms (III) and (IV) are called the \emph{empirical process term} and \emph{remainder}, respectively. These last two terms are asymptotically negligible under conditions~\ref{assum:slowConvergence} and \ref{assum:Donsker} stated in the main manuscript.  
%\begin{condition}
%	\label{assum:slowConvergence}
%	Consider a sequence of estimators $(\wh{\tau}_j^n)_{n=1}^{\infty}$ and $(\wh{\pi}_j^n)_{n=1}^{\infty}$ we have, $\|\wh{\tau}^n_j(c) - \tau_j(c;P^0)\|_{\infty} = o_p(n^{-1/4})$ and $\|\wh{\pi}^n_j(c) - \pi_j(c;P^0)\|_{\infty} = o_p(n^{-1/4})$.
%\end{condition}
%\begin{condition}
%	\label{assum:Donsker}
%	With initial estimators $\wh{\tau}_j^n$ and $\wh{\pi}_j^n$, the function $\bs{x} \mapsto D_{\theta_j(\wh{\tau}_j^n, \wh{\pi}_j^n)}(\bs{x})$ is $P^0$--Donsker~\citep{vaart1998asymptotic}.
%\end{condition}

These conditions substantially simplify our functional expansions. We do this in the following lemma. For notational convenience, we write $R_{\theta_j}(P,P^0) \equiv R_{\theta_j}[\{\pi_j(;P), \tau_j(;P)\},P^0]$ and $\theta_j(P) = \theta_j\{\pi_j(;P), \tau_j(;P)\}$ to clarify dependence on $\pi_j$ and $\tau_j$. 

\begin{lemma}
	\label{lemma:AsympcNeg}
	Under Conditions \ref{assum:slowConvergence} and \ref{assum:Donsker}, 
	\begin{equation*}
		R_{\theta_j}\{(\wh{\pi}_j^n,\wh{\tau}_j^n),P^0\} = o_p(n^{-1/2}), \quad (P^0_n - P^0)\left\{D_{\theta_j(\wh{\tau}_j^n, \wh{\pi}_j^n)} - D_{\theta_j(P^0)}  \right\} = o_p(n^{-1/2}),
	\end{equation*}
	and consequently our Taylor expansion reduces to:
	\begin{equation}
		\label{eqn:mainExpansion}
		\left[ \begin{array}{c}
			\wh{\Theta}^n_j\\% + P^0_nD_{\theta_j(\wh{P}_n)} \\
			\overline{O}\\
			\overline{E}\\
		\end{array} \right] - 		\left[ \begin{array}{c}
			{\theta}_j(P^0) \\
			\mu_O(P^0)\\
			\mu_E(P^0)\\
		\end{array} \right] =  		\left[ \begin{array}{c}
			P^0_nD_{\theta_j(P^0)} \\
			P^0_nD_{\mu_O(P^0)}\\
			P^0_nD_{\mu_E(P^0)}\\
		\end{array} \right] + o_p(n^{-1/2}),
	\end{equation}
	where $\wh{\Theta}^n_j = \theta_j(\wh{\pi}_j^n,\wh{\tau}_j^n) + P^0_nD_{\theta_j(\wh{\pi}_j^n,\wh{\tau}_j^n)}$, $\bar{O} = n^{-1}\sum_{i=1}^{n}O_i$ and $\bar{E} = n^{-1}\sum_{i=1}^{n}E_i$.
\end{lemma}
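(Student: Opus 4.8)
The plan is to prove Lemma~\ref{lemma:AsympcNeg} by handling the three terms $II$, $III$, $IV$ of the expansion in Lemma~\ref{lemma:Expansion} separately, and then assembling the conclusion.

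\textbf{Step 1: The remainder term $IV$ is $o_p(n^{-1/2})$.} Recall from Lemma~\ref{lemma:Expansion} that
\[
R_{\theta_j}\{(\wh{\pi}_j^n,\wh{\tau}_j^n),P^0\} = P^0\left[ \{\tau_j(c_j; P^0) - \wh{\tau}_j^n(c_j) \}\{ \wh{\pi}_j^n(c_j) - \pi_j(c_j;P^0) \} \right].
\]
This is an integral of a product of two differences. I would bound it using Cauchy--Schwarz (or simply the sup-norm bound): $|R_{\theta_j}| \le \|\wh{\tau}_j^n - \tau_j(\cdot;P^0)\|_\infty \cdot \|\wh{\pi}_j^n - \pi_j(\cdot;P^0)\|_\infty$. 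By Condition~\ref{assum:slowConvergence}, each factor is $o_p(n^{-1/4})$, so the product is $o_p(n^{-1/2})$. This is the step where the $n^{1/4}$-rate requirement is used, and it is the heart of why double robustness buys us a weaker rate condition than the usual parametric $n^{1/2}$.

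\textbf{Step 2: The empirical process term $III$ is $o_p(n^{-1/2})$.} This is $(P^0_n - P^0)\{D_{\theta_j(\wh{\tau}_j^n,\wh{\pi}_j^n)} - D_{\theta_j(P^0)}\}$. By Condition~\ref{assum:Donsker} the class containing $D_{\theta_j(\wh{\tau}_j^n,\wh{\pi}_j^n)}$ is $P^0$-Donsker, so the empirical process $\sqrt{n}(P^0_n - P^0)$ is stochastically equicontinuous over it (e.g.\ Lemma 19.24 of \cite{vaart1998asymptotic}). Combined with $L^2(P^0)$-consistency of $D_{\theta_j(\wh{\tau}_j^n,\wh{\pi}_j^n)}$ to $D_{\theta_j(P^0)}$ --- which follows from Condition~\ref{assum:slowConvergence} together with the explicit form of the EIC \eqref{eqn:eicTheta} and boundedness of the relevant moments --- the term is $o_p(n^{-1/2})$. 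I expect this to be the main obstacle: one must verify that $\|D_{\theta_j(\wh{\tau}_j^n,\wh{\pi}_j^n)} - D_{\theta_j(P^0)}\|_{L^2(P^0)} \to 0$, which requires unpacking the EIC into its terms $o\wh{\pi}_j^n(c_j)$, $\wh{\tau}_j^n(c_j)\{I(e{=}1) - \wh{\pi}_j^n(c_j)\}$ and $\wh{\theta}_j^n$, controlling each against its $P^0$-counterpart (here $\wh{\theta}_j^n \to \theta_j(P^0)$ since that plug-in consistency follows from the double robustness / consistency of the nuisance estimates), and invoking finite second moments of $O_i$.

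\textbf{Step 3: The term $II$ vanishes by construction of the estimators.} For the TMLE, the targeting step forces the empirical mean of the EIC to be (approximately) zero: $P^0_n D_{\theta_j(\wh{\pi}_j^{k+1},\wh{Q}_j^{k+1})} = 0$ at convergence, which by identity \eqref{eqn:3func} gives $P^0_n D_{\theta_j(\wh{\tau}_j^n,\wh{\pi}_j^n)} = 0$; hence $\theta_j(\wh{P}_K) - \theta_j(P^0)$ already equals $P^0_n D_{\theta_j(P^0)} + III + IV$. For the one-step estimator, $\wh{\theta}^{dr}_{j,n} = \theta_j(\wh{\pi}_j^n,\wh{\tau}_j^n) + P^0_n D_{\theta_j(\wh{\pi}_j^n,\wh{\tau}_j^n)} = \wh{\Theta}_j^n$ by direct inspection of \eqref{eqn:DRestimate2} against \eqref{eqn:eicTheta}, so again the $-P^0_n D_{\theta_j(P)}$ term is absorbed. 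In both cases, substituting Steps 1--2 into the expansion of Lemma~\ref{lemma:Expansion} collapses the $\theta_j$-row to $\wh{\Theta}_j^n - \theta_j(P^0) = P^0_n D_{\theta_j(P^0)} + o_p(n^{-1/2})$. The $\mu_O$ and $\mu_E$ rows are exact identities with no remainder (they have no empirical-process or remainder terms, as noted in Lemma~\ref{lemma:Expansion}), so stacking the three rows yields \eqref{eqn:mainExpansion}. I would close by remarking that $\overline{O} = \mu_O(P^0) + P^0_n D_{\mu_O(P^0)}$ and $\overline{E} = \mu_E(P^0) + P^0_n D_{\mu_E(P^0)}$ hold exactly, so the final display follows immediately.
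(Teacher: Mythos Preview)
Your proposal is correct and follows essentially the same approach as the paper: bound the remainder $IV$ as a product of two $o_p(n^{-1/4})$ factors, handle the empirical process term $III$ via Lemma~19.24 of \cite{vaart1998asymptotic} under the Donsker condition, and then read off \eqref{eqn:mainExpansion} from Lemma~\ref{lemma:Expansion}. You supply more detail than the paper does---in particular your explicit check of $L^2(P^0)$-consistency of $D_{\theta_j(\wh{\tau}_j^n,\wh{\pi}_j^n)}$, which the paper leaves implicit---and your Step~3 discussion of how the TMLE and one-step estimators relate to $\wh{\Theta}^n_j$ is material the paper defers to the proof of Theorem~\ref{thm:one} rather than this lemma, but none of this constitutes a different route.
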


\begin{proof}
	For the remainder term, it immediately follows that it is $o_p(n^{-1/2})$ since it is a product of two terms that are $o_p(n^{-1/4})$. For the empirical process term, we use standard results in empirical process theory under Condition~\ref{assum:Donsker}~(see e.g. Lemma~19.24 of \cite{vaart1998asymptotic}). These results combined with Lemma~\ref{lemma:Expansion} prove the expansion \eqref{eqn:mainExpansion}.
\end{proof}

\begin{proof}[Proof of Theorem 1]
	Proof of Theorem~1 follows simply by considering the first line in the expansion of Lemma~\ref{lemma:AsympcNeg}. Recall that for any initial estimator $\wh{\tau}_{j,n}$ and $\wh{\pi}_{j,n}$, our doubly robust estimator is given by 
	\begin{equation*}
		\wh{\theta}_j^{dr} = \theta_j(\wh{\pi}_{j,n},\wh{\tau}_{j,n}) + P^0_nD_{\theta_j(\wh{\pi}_{j,n},\wh{\tau}_{j,n})}.
	\end{equation*}
	For $\wh{\theta}_{j}^{tl}$ the TMLE algorithm ensures that starting with any initial estimators satisfying Conditions~\ref{assum:slowConvergence} and \ref{assum:Donsker}, the final TMLE estimators $\wh{\tau}_{j}^{k+1}$ and $\wh{\pi}_j^{k+1}$ (where $k$ denotes the last TMLE iterate) will also satisfy Conditions~\ref{assum:slowConvergence} and \ref{assum:Donsker}~\citep{laan2011targeted}. The TMLE procedure ensures that $P^0_nD_{\theta_j(\wh{\pi}_j^{k+1},\wh{\tau}_j^{k+1})} = 0$. The remainder of this expansion remains unchanged and that completes the proof. 
\end{proof}

\subsection{Details of TMLE algorithm}
Our TMLE algorithm was derived using the construction detailed in Chapter 5 of \cite{laan2011targeted}. For the sake of completeness we will outline the TMLE algorithm for a general parameter and discuss how this leads to the derivation of our main algorithm. As before, say we have data $\bs{X}_i\sim P^0 \{i=1,\ldots,n\}$. 

\begin{enumerate}
	\item For a target parameter $\psi : \mathcal{M}\to \mathbb{R}$, find the efficient influence curve~(EIC) denoted by $D_{\psi(P)}$. Often times the target parameters depends on $P$ through functions of $P$: $\psi(P) = \psi(Q_1(P), Q_2(P), Q_3(P))$ where $(Q_1,Q_2,Q_3)$ are variation-independent. In this case, we can decompose the EIC as 
	\begin{align*}
		D_{\psi(P)} &= D_{\psi(Q_1)} + D_{\psi(Q_3)} + D_{\psi(Q_3)}. 
	\end{align*}
	This decomposition can be achieved by projecting $D_{\psi(P)}$ on appropriate \emph{tangent spaces}~(for details see Chapter 5, \cite{laan2011targeted}). 
	%For our parameter $\theta_j$ this was defined in Lemma~\ref{lemma:EIC}.
	\item Define a loss function $L(\cdot)$ such that 
	\begin{align*}
		P^0 \gets \underset{P\in \mathcal{M}}{\operatorname{argmin } }\ \expec_{P^0}L(P).
	\end{align*}
	If $\psi(P) = \psi(Q_1(P), Q_2(P), Q_3(P))$ for variation-independent $(Q_1,Q_2,Q_3)$,
	% In our case, our parameter $\theta_j(P)$ depends on $Q_j(\cdot,\cdot; P)$, $\pi_j(\cdot; Q)$ and $P_{C_j}$~(the marginal distribution of $C_j$). These parts are variation-independent by the decomposition $p(o,e,c_j)  = p(o|e,c_j)p(e|c_j)p(c_j)$ where $p(\cdot)$ are density functions~(or Radon-Nikodym derivatives). In the case of variation independent parts 
	then define loss functions $L_1, L_2, L_3$ such that:
	\begin{align*}
		Q_j(P^0) \gets \underset{Q_j }{\operatorname{argmin } }\ \expec_{P^0}L_j(Q_j).
	\end{align*}
	for $j=1,2,3$. 
	\item Define a parametric working model $\{P_{\e}: \e \in I\}$ for some open interval $I\subset \mathbb{R}$, such that $P_{0} = P$ and satisfies
	\begin{equation*}
		\left. \frac{d}{d\e} L(P_{\e}) \right|_{\e=0} = D_{\psi (P)}. 
	\end{equation*}
	For the variation-independent decomposition, define a parametric working model for each part: $Q_{1,\e_1}$, $Q_{2,\e_2}$ and $Q_{3,\e_3}$ such that for $j\in\{1,2,3\}$,  $Q_{j,0} = Q_{j}$, and 
	\begin{equation*}
		\left. \frac{d}{d \e_j} L_j(Q_{j,\e_j}) \right|_{\e_j=0} = D_{\psi (Q_j)}. 
	\end{equation*}
	\item Given an initial estimator $\wh{P}^0$ of $P^0$, compute the update 
	\begin{align*}
		\e^0 \gets \underset{\e }{\operatorname{argmin } } \sum_{i=1}^{n} L(\wh{P}^0_{\e})(\bs{X}_i). 
	\end{align*}
	In case of the variation-independent decomposition, given initial estimators $\wh{Q}_j^0$ of $Q_j(P^0)$, compute the updates for all $j$
	\begin{align*}
		\e_j^0 \gets \underset{\e}{\operatorname{argmin } } \sum_{i=1}^{n} L_j(\wh{Q}^0_{j,\e})(\bs{X}_i). 
	\end{align*}
	Update the initial estimator as $\wh{P}^1 \gets \wh{P}^0_{\e^0}$, or $\wh{Q}^1_j \gets \wh{Q}_{j,\e^0}$. 
	\item Iterate this process: at the $k$-th iterate say we have $\wh{P}^k$~(or $\wh{Q}_j^{k}$). Then compute 
	\begin{align*}
		\e^k \gets \underset{\e }{\operatorname{argmin } } \sum_{i=1}^{n} L(\wh{P}^k_{\e})(\bs{X}_i), 
	\end{align*}
	and update $\wh{P}^{k+1} = \wh{P}^{k}_{\e^k}$. Similarly we can update $\wh{Q}_j^{k+1} = \wh{Q}^{k}_{j,\e_j^k}$ for $j=1,2,3$.
	\item Iterate until $\e^k = 0$ or $\e_j^k = 0$ for all $j$, resulting in the final estimator $\wh{P}^K$ or $\wh{Q}_j^{K}$. The TMLE is then the substitution estimator $\psi(\wh{P}^K)$ or $\psi(\wh{Q}_1^K, \wh{Q}_2^K, \wh{Q}_3^K)$.
\end{enumerate}
\vspace{5mm}
\noindent \textbf{We now discuss the above steps in the context of our specific parameter $\theta_j$.} 

\begin{enumerate}
	\item In our case we have $\theta_j(P) = \theta_j(Q_j, \pi_j, P_{C_j})$ where the variation-independence of our constituent parts is given by the decomposition $P = P_{O|E,C_j}P_{E|C_j}P_{C_j}$. We can decompose our EIC for $\theta_j$ as 
	\begin{align*}
		D_{\theta_j(P)}(\bs{x}) = D_{O}(\bs{x}) + D_{E}(\bs{x}) + D_{C}(\bs{x}),
	\end{align*} 
	where 
	\begin{align*}
		D_{O}(\bs{x}) &= o\pi_{j}(c_j; P) - Q_j(e,c_j;P)\pi_j(c_j;P),\\
		D_{E}(\bs{x}) &= Q_j(e,c_j;P)\pi_j(c_j;P) + \tau_j(c_j;P)\{ I(e=1) - \pi_j(c_j;P) \} - \tau_j(c_j;P)\pi_j(c_j;P) ,\\
		D_{C}(\bs{x}) &= \tau_j(c_j;P)\pi_j(c_j;P) - \theta_j(P).
	\end{align*}
	
	\item Now for the loss functions we have 
	\begin{align*}
		L_1(Q_j)(\bs{x}) &= \frac{1}{2}\{y - Q_j(e,c_j)\}^2,\\
		L_2(\pi_j)(\bs{x}) &= -e \log\{\pi_j(c_j)\} - (1-e )\log\{1 - \pi_j(c_j)\},\\
		L_3(P_{C_j})(\bs{x}) &= - \log P_{C_j}(c_j).
	\end{align*}
	
	\item We can now define the parametric working models.
	\begin{align*}
		Q_{j,\e_1} &= Q_{j} + \e_1 H_1, \\
		\text{ where } &H_1(e,c_j;P) = -\pi_j(c_j;P).\\
		\pi_{j,\e_2} &=  \expit\left\{ \logit(\pi_j) + \e_2 H_2\right\},\\
		\text{ where } &H_2(c_j;P) = -2\pi_j(c_j;P)\{Q_j(1,c_j;P) - Q_j(0, c_j;P)\} - Q_j(0,c_j;P).\\
		P_{C_j,\e_3} &= (1+\e_3H_3)P_{C_j},\\
		\text{ where } &H_3(c_j;P) = -\tau_j(c_j;P)\pi_j(c_j;P) + \theta_j(P). 
	\end{align*}
	
	\item With the above definitions of the loss function and functions $H_j$, this step immediately leads to Step 2 of the main algorithm in the manuscript. For $P_{C_j}$, let our initial estimator, $P_{n,C_j}$, be the empirical distribution. We then have that $\e_3^0 = 0$ which is obvious from the fact that 
	\begin{align*}
		\frac{d}{d\e} L_3(P_{C_j,\e}) &= -\frac{H_3 P_{C_j}}{P_{C_j} + \e H_3P_{C_j}}.\\
		\frac{d}{d\e} n^{-1}\sum_{i=1}^{n} L_3(P_{C_j,\e=0})(C_{ji}) &= n^{-1}\sum_{i=1}^{n} -H_3(C_{ji}; P_{n, C_j}),\\
		&= P_n \wh{\tau}^0_{j}(C_j)\wh{\pi}^0_j(C_j) - \expec_{P_{n,C_j}}\left\{  \wh{\tau}^0_{j}(C_j)\wh{\pi}^0_j(C_j) \right\},\\
		&= 0.
	\end{align*}
	Thus through out the iterations $\e_3^k = 0$ for all $k$.
	\item Again, this gives us Step 2 of the main algorithm in the manuscript. The initial estimators for $P_{C_j}$ do not change if we start with the empirical distribution. 
	\item This gives us the final step of our TMLE algorithm. 
\end{enumerate}

\begin{note}
	The last thing we discuss is possible variations of our original algorithm. The path $Q_{j,\e_1}$ and correcponding loss function was made for continuous response variable $O_i$. We could alternatively consider the scenario where $O_i\in \{0,1\}$ or equivalently the case of bounded continuous $O_i$ as they can be scaled to the unit interval. In this case we can define 
	\begin{align*}
		L_1(Q_j)(\bs{x}) &= -o \log\{Q_j(e,c_j)\} - (1-o )\log\{1 - Q_j(e,c_j)\},\\
		\logit(Q_{j,\e_1}) &= \logit(Q_j) + \e_1 H_1,
	\end{align*}
	for some function $H_1$. To find the function $H_1$, we need to consider
	\begin{align*}
		\frac{d}{d\e_1} L_1(Q_{j,\e_1=0})(\bs{x}) &= -H_1(e,c_j;P)(o - Q_j(e,c_j;P)),\\
		&= D_{O}(\bs{x})\ \ \  \{\text{if } H_1(e,c_j;P) = -\pi_j(c_j;P)\}.
	\end{align*}
	Similarly, estimating $\theta_j$ for other data types of $O_i$~(e.g. count data, continuous positive) is simply a matter of defining the right loss function and working parametric model. 
	
\end{note}

\end{document}